\newcommand{\pos}[1]{\left[#1\right]^{+}}										
\newcommand{\mb}[1]{\ensuremath{\mathbf{#1}}}
\newcommand{\wh}[1]{\widehat{#1}}														
\newcommand{\wt}[1]{\widetilde{#1}}													
\spnewtheorem{assumption}{Assumption}{\bf}{\it}
\newcommand{\squishlist}{
   \begin{list}{$\bullet$}
    { \setlength{\itemsep}{0pt}      \setlength{\parsep}{1.5pt}
      \setlength{\topsep}{3pt}       \setlength{\partopsep}{0pt}
      \setlength{\leftmargin}{1.5em} \setlength{\labelwidth}{1em}
      \setlength{\labelsep}{0.5em} } }
\newcommand{\squishend}{
    \end{list}  }
\begin{document}
\title{Fixed and Market Pricing for Cloud Services
}
\author{Vineet Abhishek \thanks{This work was done while Vineet Abhishek was interning at Microsoft Research}\and Ian Kash \and Peter Key}
\institute{Vineet Abhishek
 \at Walmart Labs,  USA \\ \email{vineet.abhishek@gmail.com}
 \and Ian Kash
  \at Microsoft Research, 21 Station Road, Cambridge, CB1 2FB, UK,  \\ \email{iankash@microsoft.com}
  \and  Peter Key
    \at Microsoft Research, 21 Station Road, Cambridge, CB1 2FB, UK,  \\      \email{peter.key@microsoft.com}
  }
%
\date{\today}

\maketitle
%
%

\begin{abstract}
We study a model of congestible resources, where pricing and scheduling are intertwined.  Motivated by the problem of pricing cloud instances, we model a cloud computing service
as linked $GI/GI/\cdot$ queuing systems  where the provider chooses to offer a fixed pricing service, a dynamic market based service, or a hybrid of both, where jobs can be preempted in the market-based service. Users (jobs), who are heterogeneous in  both the value they place on service and  their cost for waiting,  then choose between the services offered.
Combining insights from auction theory with queuing theory we are able to characterize user equilibrium behavior,  and show its insensitivity to the precise market design mechanism used.  We then provide theoretical and simulation based evidence suggesting that a fixed price typically, though not always,  generates a higher expected revenue than the hybrid system for the provider.
\keywords{Cloud Services \and Spot Market \and Pricing \and Congestion Resources \and Auctions }
\end{abstract}

\section{Introduction} \label{sec:introduction}
\textit{Cloud computing} provides on-demand and scalable access to computing resources. Public clouds, such as Windows Azure and Amazon EC2, treat infrastructure computing as a service (IaaS) that can be purchased and delivered over the Internet.  An agent purchases units of computing time on virtual machines (referred to as \textit{instances}). The most commonly used pricing mechanism for instances is \textit{pay as you go} (henceforth, PAYG), where an agent is charged a fixed price per unit time per instance. However, given stochastic demand, such fixed pricing may result in unused resources. Rather than letting resources sit idle, the provider could operate a \textit{spot market}, selling unused resources at a reduced price via an auction  to agents willing to tolerate delays and interruptions; indeed, Amazon EC2 runs such a market for Spot Instances.  Ben-Yehuda et al.~\cite{AgmonBen-Yehuda:2013} attempt 
a retrospective deconstruction of how the Spot Instances are priced.

This paper examines the trade-offs for a provider deciding whether or not to operate a spot market. On one hand, operating a spot market can create price discrimination, as agents with low valuations and low waiting costs compete for spot instances, thereby extracting payments from the agents who would balk if PAYG were the only option. On the other hand, the spot market provides a cheaper alternative for agents with high valuations but low waiting cost, causing a loss of revenue from the agents who would have paid a higher PAYG price if PAYG were the only option.   In consequence,  it is not obvious if operating PAYG and the spot market simultaneously provides any net gain in the expected revenue to the cloud service provider.  Nor is it clear how operating a spot market affects social welfare: while adding a spot market causes some new agents to be served, some existing agents that were already being served will now join the spot market, increasing their waiting costs.  Furthermore, if the cloud services provider desires to extract a given amount of revenue, while maximizing welfare subject to that constraint, he would set different prices in each system.

To quantify the trade-offs  we construct a simple, representative  model of a cloud computing service with agents who are heterogeneous both in their value for service and in their waiting cost. We first analyze PAYG and a spot market in isolation and use the resulting insights to analyze what happens when they operate simultaneously. Our analysis is not tied to any particular pricing rule for the spot market. Instead, we use a characterization similar to the revenue equivalence theorem for auctions \cite{Myerson81} to characterize the expected payment made by an agent in any equilibrium of any pricing rule. Moreover, while the analysis of the queuing system with multiple priority classes and multiple servers is complex (see, e.g., \cite{Harchol-Balter-etal2005}, \cite{Heijden-etal2004}) an application of the revelation principle \cite{Myerson81} allows us to circumvent this complexity. We  describe a general queuing system for the spot market purely in terms of a waiting time function and exploit its properties for our analysis.    Throughout our paper, by  ``waiting time''' we mean the waiting time in the \emph{system} or sojourn time.  

In summary,  we  make four  contributions in this paper:

\begin{enumerate}[(i)]
\item
We combine insights from auction theory with queuing theory: we model a cloud computing service as a queuing system described by a waiting time function, and then apply techniques from the theory of optimal auctions to analyze it.  

\item
We characterize agent behavior, and  show that, in the unique equilibrium, agents have a waiting cost threshold that determines whether they participate in the spot market or PAYG. Moreover, their bids in the spot market are independent of their value for service and increasing in their waiting cost\footnote{Throughout this paper, ``increasing'' means ``strictly increasing.''}.

\item
Using this equilibrium characterization, we provide theoretical and simulation evidence suggesting that operating PAYG in isolation generally, though not always, provides a higher expected revenue to the cloud service provider than operating PAYG and a spot market simultaneously.  This is under the assumption that there is no other cost to running a job in the spot market, e.g.  that there is not cost in being preempted.

\item
We prove that, taking the PAYG price as a real cost (not just a transfer), agents make \emph{efficient}  decisions about whether to join the spot market; we  give  simulation evidence describing the tradeoffs between efficiency in revenue under each system.
\end{enumerate}

While our results are based on a stylized model of a cloud computing system, we also discuss how the assumptions of our model can be relaxed and the implications of our results for the decision  Amazon has made to run a spot market.

\subsection{Related Work}
Our work is at the nexus of queuing theory and game theory. Hassin and Haviv~\cite{Hassin&Haviv2002} provide a survey of this area. For observable $M/M/1$ queues with identical customers, Balachandran~\cite{Balachandran1972} derives a full information equilibrium. Hassin~\cite{Hassin1995} and Lui~\cite{Lui1985} consider unobservable $M/M/1$ queues where customers with heterogeneous waiting costs bid for preemptive priority using the first price auction. They characterize an equilibrium where bids are increasing in the waiting cost. Af\`{e}che and Mendelson~\cite{Afeche&Mendelson2004} extend this to more general waiting costs.
Dube and Jain~\cite{Dube&Jain2009} consider a different problem with competing $GI/GI/1$ priority queues; arriving jobs decide which queue to join. They find conditions for the existence of a Nash equilibrium.

  Closest to our work are papers that apply the theory of optimal auction design to optimize pricing and service policies in queuing system. Af\`{e}che~\cite{Afeche2004} and Af\`{e}che and Pavlin~\cite{afeche2013} show that delaying jobs or choosing orderings that increase processing time can increase revenue. Yahalom et al.~\cite{Yahalom-etal2006} generalize \cite{Afeche2004} by relaxing the distributional assumptions on valuation and working with convex delay cost. Katta and Sethuraman~\cite{Katta&Sethuraman2005} design a pricing scheme that, under some assumptions, is optimal for an M/M/1 queuing system and certain generalizations of it. Cui et al.~\cite{Cui-etal} consider the problem of jointly managing pricing, scheduling, and admission control policy for revenue maximization for $M/M/1$ queues and find solutions for some special cases.  Xu and Li~\cite{Xu2013} examine possibilities to improve revenue in a PAYG market through resource throttling.  Doroudi et al.\cite{Doroudi-et-al2013} study pricing in an $M/G/1$ queue for a single class, and a continuum of customers in that class.  They assume that waiting costs are proportional to valuations, which allows customer types to be unidimensional and hence the theory of optimal (revenue maximizing) auctions that flowed from Myerson's seminal work~\cite{Myerson81} can be applied. They derive closed form solution for the price functions when the customer values are drawn from specific types, and compare priority pricing with fixed pricing.  While their results show something similar in spirit to spot pricing raising more revenue than PAYG, this is driven by an assumption that their PAYG market is oversubscribed while this is not the case for us.

One issue we do not address is competition among cloud providers, an issue studied by Anselmi et al.~\cite{anselmi2014}.

 Compared to previous work in this literature, the distinguishing aspects of our work are: (i) we allow for an arbitrary queuing system with multiple servers and arrival process which need not be memoryless; (ii) our analysis is not tied to a specific auction mechanism for the spot market; (iii) we allow PAYG and the spot market to operate simultaneously and are not limited to analyzing a system in isolation; and (iv) we examine the tradeoff between efficiency and revenue.

\section{Model} \label{sec:model}
Consider a cloud computing system where jobs arrive sequentially according to a stationary stochastic process with independent interarrival times. Each job demands one instance and is associated with a distinct agent;  we will use the terms ``agents'' and ``jobs'' interchangeably.   The system designs the pricing and scheduling mechanism,   with the aim of maximizing revenue, while the jobs aim to maximize their expected payoff.

The service time for each job is independently drawn according to an arbitrary distribution with the expected time of $1/\mu$, where we  assume that the exact service time is unknown to anyone, including the job itself. Jobs differ in their values for service and their waiting costs. There are $n$ classes of jobs. Each job from class $i$ has the same value $v_i$ for job completion, and we assume $v_i > v_{i+1}$. The total arrival rate of potential jobs is $\lambda = \sum_i \lambda_i$. Each job is independently assigned class $i$ with probability $\lambda_i/\lambda$, hence the total arrival rate of potential jobs from class~$i$ is $\lambda_i$. Each job from class $i$ incurs a waiting cost per unit time which is an i.i.d. realization of a random variable $C_i$ with the cumulative distribution function (cdf) $F_i(c)$,  where $f_i(c)$ is  the corresponding probability density function (pdf) of $F_i(c).$  The class and exact waiting cost of a job is its private information; however, the class values $v_i$ and probability distributions $F_i$ are common knowledge. The random variable $C_i$'s are independent of each other.

Jobs (agents)  choose whether or not to enter the system,  are \emph{Individually Rational} and  risk neutral   with respect to payments  and benefits and hence aim to maximize  their expected payoff.  If a job from class $i$ with waiting cost $c$ pays an amount $m$ for using the instance and spends the total time $w$ in the system (the sum of the queuing time and the service time, referred to as the \textit{waiting time}), then the  full price to the job is the sum of the direct payment and indirect waiting costs, $cw+m$, and   then hence its payoff is $v_i - cw - m$.
Jobs compete for system resources by   submitting  a ``bid'' to the system, where the information contained in the bid will depend on the  mechanism design, and is some function of the job's value, $v_i$ and waiting cost $c_i$.

 Under our Individually Rational assumption, each job  competes to acquire an instance only if its expected payoff is nonnegative.  Hence, $f_i(c)$ is assumed to be strictly positive\footnote{This is not a restrictive assumption in practice since we can approximate any distribution arbitrarily well with such a distribution.  See, e.g., Figures \ref{fig:pdfs} and \ref{fig:beta-rev} and associated discussion} for $c \in [0, \mu v_i]$, (since jobs from class~$i$ with waiting cost greater than $\mu v_i$ will always balk).
%

Two pricing and scheduling regimes are allowed,  shown schematically in Figure \ref{fig:payg-spot-q}  and which we now describe.
\begin{figure} 
\begin{center}
\includegraphics[trim=0.0in 3.00in 5.00in 0.25in, clip=true,height=3.0in]{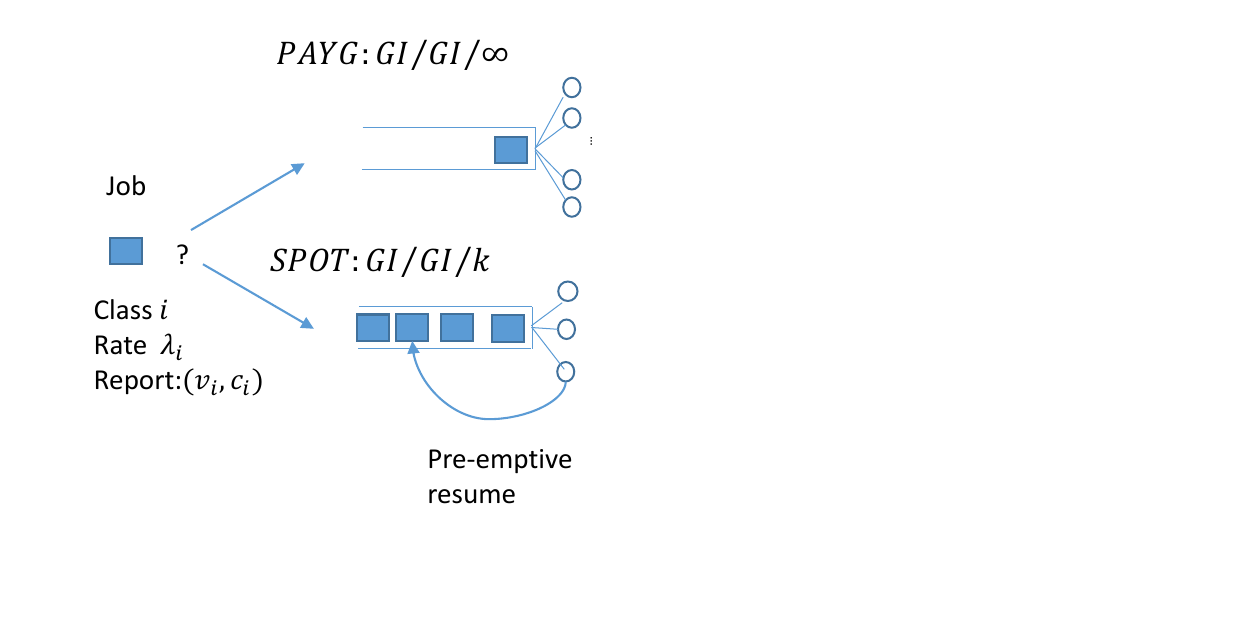}
\caption{System model for PAYG and spot market}.\label{fig:payg-spot-q}
\end{center}
\end{figure}

\textbf{Modeling PAYG}: We assume that the overall system has enough capacity to  serve the exogenous demand $\lambda / \mu$ with negligible buffering or rejection of individual jobs, and hence    PAYG is modeled as a $GI/GI/\infty$ system with service rate $\mu$  (this is discussed further in Section \ref{sec:finiteCap}). A job arriving to PAYG joins immediately and is served until completion. Each job is charged a price $p > 0$ per unit time for using a PAYG instance. The price $p$ is common knowledge. The only information contained in a ``bid'' from a job is a request to enter.   The expected payoff of a job from class $i$ with the waiting cost $c$ from using a PAYG instance is thus $v_i - (c+p)/\mu$. If $c > \mu v_i - p$, the job does not participate in PAYG. 

\textbf{Modeling the spot market}: The spot market is modeled as a $GI/GI/k$ system with preemption where jobs bid for priority, where a bid can be thought of (but does not have to be) a single number that represents a willingness to pay.    We will be working with auctions where a job with a higher bid is given priority over a job with a lower bid and can preempt the lowest priority job under service if needed; Section~\ref{sec:strategy} provides further details on the assumptions we make on the relationship between bids and priorities. A job which is preempted goes back to the queue and waits to resume from the point it left. The queue state
is unobservable to the arriving jobs. Jobs are not allowed to renege or change their bids. A job is charged based its own bid and the bids of others according to some spot pricing mechanism.
Examples include the first price auction where at each time the jobs with $k$ highest bids currently in the system
are served and each pays its bid, and the $(k+1)^{th}$ price auction where the jobs with $k$ highest bids are served and each job pays the current $(k+1)^{th}$ highest bid per-unit time. We do not explicitly assume any specific spot pricing mechanism and abstract away from it by considering the expected payment by a job in a Bayes Nash Equilibrium (henceforth, BNE) using the revenue equivalence theorem for auctions~\cite{Myerson81}. 
A consequence is that only the expected payment matters, and hence any payment implementation that achieves the correct payments can be used. In practice, charging via a  per-unit time price is a natural implementation.

\section{Mechanism Design} \label{sec:mechd}
We now recap certain fundamental concepts  from mechanism design, which we make use of in our analysis, and which may be unfamiliar to the queueing systems community.

Mechanism design is a branch of game theory that describes rules governing the allowed interactions of agents.  Given some allowed set of \emph{bids}
or reports from the buyers, which will depend on their underlying ``true'' reports or values, a mechanism prescribes both  an allocation rule (who should get what) and a payment rule,  how much they should pay, where the allocation and payment rules are functions of the submitted bids.
A standard notion of a stable solution relevant to this setting is that of a Bayes Nash Equilibrium (henceforth BNE).  In a BNE there is a prior distribution over the values of other agents and the strategy adopted by each agent is optimal in expectation given this prior, his value, and the strategies other agents of each type would adopt.

 The \emph{revelation principle}, states that if there is a BNE for some mechanism, then an equivalent direct  \emph{truthful} mechanism exists, one where it is an equilibrium for the  the buyers to bid their true values,  equivalent in the sense of outcomes (allocation and payments).
This result, proved by Myerson~\cite{Myerson79}, makes it sufficient to focus on truthful, direct mechanisms.  
 
 How to choose amongst competing (truthful direct) mechanisms?  In a seminal work,  Myerson~\cite{Myerson81} showed that for mechanisms that are truthful or \emph{incentive compatible} in the BNE sense, then first, the allocation rule leads to one and only one payment rule, and secondly that the payoff formula bounds the payoff of any feasible mechanism.  The first result is a \emph{Revenue Equivalence} result:  different mechanisms which lead to the same allocation rule provide the same revenue; the second allows the design of optimal mechanisms, where revenue is maximized.  In this paper, we use these techniques to characterize the equilibrium outcome once a PAYG price has been fixed.

One notable truthful mechanism is the Vickrey-Clarke-Groves or VCG mechanism,  which has the property that it is \emph{efficient}, i.e. maximizes social welfare (see, for example \cite{Krishna10} for a description of VCG).    It also has the property that the payment rule for an agent depends on the bids of other agents and represents the externality that agent imposes by being part of the system. For example, if $k$ identical goods are being sold, then the VCG mechanism is the Vickrey auction~\cite{Vickrey61}  that allocates the $k$ items to the highest $k$ bidders, and an agent wining pays the $k+1$st highest bid (i.e. the first losing bid).  In this paper, we make use of this characterization of the VCG mechanism when analyzing the social welfare created by a hybrid market in Section~\ref{sec:welfare}.


\section{PAYG and Spot Market Analysis} \label{sec:PAYG-Spot}

\subsection{Strategy, waiting time, and spot pricing} \label{sec:strategy}
When a spot market is operating, either alone or in conjunction with PAYG, a job that decides to join it participates in an auction and must decide how to bid based on the payment rules of the auction.  The optimal bid may depend in a complicated way on its private information (value for service and cost of waiting).  As a result, previous work has typically focused on analyzing a particular mechanism such as a first price auction.  However, we show in this section that this complexity is dispensable.  Regardless of the auction mechanism, jobs that enter the spot market with higher waiting costs pay more and wait less time and these values are (essentially) independent of the job's class.  The job's class does matter in determining whether a job participates in the spot market, but this takes the form of a simple cutoff: jobs with waiting costs below the cutoff participate and those above do not.

By the revelation principle for BNE \cite{Myerson79}, it suffices to restrict our consideration to truthful direct revelation mechanisms: mechanisms where jobs report their private information and it is an equilibrium for them to do so truthfully.  Any implementable outcome is implementable by such a mechanism.  Thus, a job reports a type $(v,c)$ and if it participates in the spot market has an expected waiting time $\wt{w}(v,c)$ and expected payment $\wt{m}(v,c)$.  In principle, these could depend on the value $v$ of the job's class, however, we show that it is essentially without loss to assume they do not. 

\begin{lemma} \label{lemma:class-independent}
For all truthful direct revelation mechanisms for the spot market and all equilibria there exists an equilibrium with the same expected utility where expected waiting time and payments are independent of class for all values of~$c$ where multiple classes participate in the spot market.
\end{lemma}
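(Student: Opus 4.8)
The plan is to exploit the fact that a job's value $v_i$ enters its payoff $v_i - cw - m$ only as an additive constant, so $v_i$ never affects how a job ranks the reports available to it; consequently two jobs of different classes but identical waiting cost face literally the same decision problem.

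First I would write down the cross-class incentive constraints at a cost $c$ where both classes participate. Truthfulness requires that a class-$1$ job with cost $c$ not prefer to report $(v_2,c)$ and that a class-$2$ job with cost $c$ not prefer to report $(v_1,c)$. In each of these two deviations the job's own value is held fixed, so $v_i$ cancels and the two inequalities combine into
\[
c\,\wt{w}(v_1,c) + \wt{m}(v_1,c) \;=\; c\,\wt{w}(v_2,c) + \wt{m}(v_2,c).
\]
Writing $T_i(c) := c\,\wt{w}(v_i,c)+\wt{m}(v_i,c)$ for the total (waiting plus payment) cost of a truthful class-$i$ job, this reads $T_1 \equiv T_2$ on the common participation region, and hence the utilities $v_i - T_i(c)$ coincide across classes up to the additive $v_i$; this already delivers the ``same expected utility'' part of the claim.

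Next I would upgrade equality of the scalar $T_i$ to equality of the pair $(\wt{w},\wt{m})$. Since the allocation depends only on the report, both classes choose from one common menu $\mc{M} = \{(\wt{w}(v',c'),\wt{m}(v',c'))\}$, and because $v_i$ cancels both minimize $cw+m$ over $\mc{M}$; thus $T_i(c) = \min_{(w,m)\in\mc{M}}(cw+m)$ is a single concave function of $c$, and by the envelope theorem (equivalently, by within-class monotonicity of $\wt{w}(v_i,\cdot)$) its derivative, wherever it exists, equals the waiting time of the chosen allocation, i.e.\ $T_i'(c) = \wt{w}(v_i,c)$. At every $c$ where $T := T_1 = T_2$ is differentiable we therefore obtain $\wt{w}(v_1,c) = T'(c) = \wt{w}(v_2,c)$, and then $\wt{m}(v_1,c) = \wt{m}(v_2,c)$ from the displayed identity. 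A concave function is differentiable outside a countable set, so class-independence holds for all but countably many~$c$.

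The remaining obstacle, and the reason the statement only asserts existence of an \emph{equivalent} equilibrium, is this countable exceptional set: at a kink of $T$ the menu has a supporting segment of slope $-c$, a job with cost $c$ is genuinely indifferent among a range of allocations, and the original equilibrium may have assigned the two classes to different points of that segment. I would remove this by reassigning, at each such cost, both classes to one common optimal allocation. Because these costs form a null set under the continuous laws $F_i$ and the reassigned jobs were indifferent, the distribution of reports is unchanged as a measure, so the waiting-time function induced by the queue dynamics is unchanged and the modified profile remains a truthful equilibrium, while expected utilities are unaffected since only a null set of types is altered. The delicate point to check is exactly this last step: that perturbing a null set of reports leaves the endogenous waiting-time function, and hence feasibility in the queue, intact.
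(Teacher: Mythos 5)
Your proof is correct, but it reaches the conclusion by a genuinely different route than the paper's. Both arguments start from the same cancellation observation: the value $v_i$ enters the payoff only additively, so a job of either class with cost $c$ minimizes $c\wt{w}+\wt{m}$ over the same menu, and hence the set of optimal reports is class-independent. From there the paper takes a purely constructive, probabilistic path: letting $s_1,s_2$ be the two classes' (possibly different) optimal strategies at cost $c$, it has every job of cost $c$, regardless of class, play $s_1$ with probability $\lambda_1 f_1(c)/(\lambda_1 f_1(c)+\lambda_2 f_2(c))$ and $s_2$ otherwise; this mixing leaves the aggregate arrival process of reports literally unchanged, so the waiting-time and payment functions are unaffected, the mixture remains optimal (both $s_1$ and $s_2$ were optimal), and class-independence holds by construction at \emph{every} cost simultaneously --- kinks included, with no differentiability or measure-theoretic bookkeeping. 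Your path is more analytic: the cross-class IC constraints give $T_1\equiv T_2$; concavity of $T=\min_{\mc{M}}(cw+m)$ plus the envelope argument pins the allocation down uniquely at every differentiability point, so the original equilibrium is already class-independent outside a countable set; you then patch the kinks by reassignment, using indifference and the fact that altering a null set of types cannot change the distribution of reports. This buys a strictly stronger structural conclusion (any equilibrium is already class-independent for all but countably many $c$, and your modified equilibrium uses pure rather than mixed strategies) at the price of extra machinery. The ``delicate point'' you flag --- that a null-set perturbation leaves the endogenous waiting-time function intact --- is sound, and for essentially the same reason the paper's one-sentence invariance claim is sound: costs are drawn from distributions with densities, so with probability one no arriving job has a cost in the exceptional set, and the law of the queue process is literally unchanged.
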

\begin{proof}
A job of class $i$ with waiting cost $c$ that participates in the spot market chooses a report $(v',c')$ minimizing the expected total cost
$c \wt{w}(v',c') + \wt{m}(v',c')$. Thus, when multiple classes participate, the set of optimal reports is class-independent; in particular, if classes $i$ and $j$ participate,  both $(v_i,c)$ and $(v_j,c)$ belong to the set of optimal reports.
Let $s_i(c)$ be the (randomized) equilibrium strategy for class $i$ with cost $c$. Now, suppose instead that every job from a class that participates with positive probability with waiting cost $c$ uses strategy $s_i(c)$ with probability $\lambda_i f_i(c) / (\sum_{j \in SPOT(c)} \lambda_j f_j(c))$ where $SPOT(c)$ is the set of all classes that participate in the spot market with positive probability with cost $c$. Then the arrival process for all strategies $s_i(c)$ remains identical to the original process.
That is, the probability that the next job to arrive reports a pair $(v,c)$ is unchanged.
Hence the waiting time and the expected payment remain unchanged. This new class-independent randomized strategy is also an equilibrium for all classes.
\qed \end{proof}

Since jobs can undo any tie-breaking a class-based mechanism does, we assume for the remainder of the paper that mechanisms have a class-independent expected waiting time $\wt{w}(c)$ and expected payment $\wt{m}(c)$.    This allows us to concentrate on one-dimensional reports (bids) from the agents, related to their waiting cost.
We now show that jobs with higher waiting costs pay more and spend less time waiting.

\begin{lemma} \label{lemma:w-m-monotonicity}
In (the truthful) equilibrium, $\wt{w}(c)$ is nonincreasing in $c$ and $\wt{m}(c)$ is nondecreasing in $c$ for values of $c$ that participate in the spot market for some class.
\end{lemma}
\begin{proof}
Consider $\wh{c} > c$. The optimality of truthful reporting implies:
\begin{align}
\wh{c}\wt{w}(\wh{c}) + \wt{m}(\wh{c}) & \leq \wh{c}\wt{w}(c) + \wt{m}(c), \label{eq:w-m-eq1} \\
c\wt{w}(c) + \wt{m}(c) & \leq c\wt{w}(\wh{c}) + \wt{m}(\wh{c}). \label{eq:w-m-eq2}
\end{align}
Adding \eqref{eq:w-m-eq1} and \eqref{eq:w-m-eq2} implies $\wt{w}(\wh{c}) \leq \wt{w}(c)$.
Using this and \eqref{eq:w-m-eq2}, we get $\wt{m}(\wh{c}) \geq \wt{m}(c)$.
\qed \end{proof}

Thus far, our assumptions have been without loss of generality.  We now make two assumptions that are not.
\begin{assumption}
  We assume that jobs with no waiting cost are served for free in the spot market, hence $\wt{m}(0) = 0$. 
 \end{assumption}
We revisit this assumption at the end of the paper.
 \begin{assumption}
We assume that, in equilibrium in the spot market, jobs with higher waiting costs always have strictly higher priority than jobs with lower waiting costs.
\end{assumption} Note that this assumption is a stronger condition than assuming that $\wt{w}(c)$ is decreasing. Since $\wt{w}$ is the expected waiting time, if priorities are assigned randomly it is possible to have  a strictly lower expected waiting time but in some cases a lower priority. All mechanisms that assign a strictly higher priority to the jobs with higher bids in the spot market, admit an equilibrium where the spot market bids are increasing in the waiting cost, and have no reserve price satisfy these restrictions.

We now characterize the participation decision facing jobs.

\begin{lemma}
For each class $i$ there is a cutoff $c_i$ below which jobs participate in the spot market and above which they do not.
\end{lemma}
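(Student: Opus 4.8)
The plan is to show that, for each class, the set of waiting costs $c$ at which a job prefers the spot market to all of its alternatives is an interval of the form $[0,c_i]$, so that $c_i$ is the desired cutoff. Write $g(c) = c\wt{w}(c) + \wt{m}(c)$ for the expected total cost (waiting cost plus payment) borne by a truthful spot participant; by Lemma~\ref{lemma:class-independent} this is class-independent. The spot-market payoff of a class-$i$ job with waiting cost $c$ is then $v_i - g(c)$, its PAYG payoff is $v_i - (c+p)/\mu$, and its balking payoff is $0$. The job participates in the spot market exactly when $v_i - g(c)$ is at least both of these, i.e.\ when $g(c) \le v_i$ and $g(c) \le (c+p)/\mu$. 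It therefore suffices to prove that each of these two sub-level sets is downward closed in $c$; their intersection is then an interval containing $c=0$ (note $g(0) = \wt{m}(0) = 0$ by our normalization, so the interval is nonempty).

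The first step is to show $g$ is nondecreasing. Rearranging the incentive constraint \eqref{eq:w-m-eq2} for $\wh{c} > c$ gives $g(c) \le g(\wh{c}) - (\wh{c}-c)\wt{w}(\wh{c})$, hence $g(\wh{c}) - g(c) \ge (\wh{c}-c)\wt{w}(\wh{c}) \ge 0$ since waiting times are nonnegative. Consequently $v_i - g(c)$ is nonincreasing, so $\{c : g(c) \le v_i\}$ --- the region where the spot market beats balking --- is downward closed, which already settles the lemma when the spot market operates in isolation.

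The more delicate step is the comparison with PAYG, because the PAYG cost $(c+p)/\mu$ is itself increasing in $c$, so monotonicity of $g$ alone does not control the sign of $D(c) := g(c) - (c+p)/\mu$. Here I would reuse the sharper bound just derived, $g(\wh{c}) - g(c) \ge (\wh{c}-c)\wt{w}(\wh{c})$, to obtain
\[
D(\wh{c}) - D(c) \;\ge\; (\wh{c}-c)\Big(\wt{w}(\wh{c}) - \frac{1}{\mu}\Big).
\]
The key observation is that $\wt{w}(\wh{c}) \ge 1/\mu$: any waiting time includes the service time, whose expectation is $1/\mu$, whereas PAYG is a $GI/GI/\infty$ system in which a job is served immediately and waits exactly $1/\mu$ in expectation. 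Thus $D$ is nondecreasing, so $\{c : D(c) \le 0\}$ --- the region where the spot market beats PAYG --- is also downward closed. Intersecting the two downward-closed regions yields the cutoff $c_i$.

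I expect the main obstacle to be precisely this PAYG comparison: one must resist concluding from monotonicity of $g$ alone and instead combine the envelope-type inequality with the structural fact $\wt{w}(c) \ge 1/\mu$ that distinguishes the queued spot market from immediate PAYG service. Once $D$ is shown nondecreasing, the balking comparison and the assembly of the cutoff are routine.
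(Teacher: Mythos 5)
Your proof is correct and follows essentially the same route as the paper's: both arguments rest on the incentive-compatibility inequality between a lower- and a higher-cost type (the paper phrases it as a mimicking deviation, you as the envelope bound $g(\wh{c})-g(c) \geq (\wh{c}-c)\wt{w}(\wh{c})$) combined with the structural fact $\wt{w}(c) \geq 1/\mu$ to settle the comparison with PAYG. Your treatment via the nondecreasing difference $D(c) = g(c) - (c+p)/\mu$ simply makes explicit what the paper compresses into its closing sentence that ``the argument with PAYG as an option is essentially the same because the minimum possible value of $\wt{w}(c)$ is $1/\mu$.''
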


\begin{proof}
 A job participates in the spot market if the payoff is better than its alternative (0 if the spot market is operated in isolation or $\max\{0,v_i - (p + c)/ \mu\}$ if PAYG with price $p$ is available).  The payoff from participation is $v_i - c \wt{w}(c) -\wt{m}(c)$.  Let $c$ be any class that participates.  Taking the case of the spot market in isolation first, if $v_i - c \wt{w}(c) -\wt{m}(c) \geq 0$ then  $v_i - \wh{c} \wt{w}(c) -\wt{m}(c) > 0$ for all $\wh{c} < c$.  Thus, if  a job of class $i$ with cost $c$ participates all lower cost jobs do as well. This argument also implies that if a job with waiting cost $c$ does not participate, then neither does any  job with waiting cost $\wh{c} > c$ . Thus, there is some threshold $c_i$ below which jobs participate and above which they do not.  The argument with PAYG as an option is essentially the same because the minimum possible value of $\wt{w}(c)$ is $1/\mu$, the same as the waiting time under PAYG.
\qed \end{proof}

In order to characterize an equilibrium where jobs use cutoffs $\mb{c}=(c_1,..,c_n)$, we need to analyze the expected waiting time for a job with waiting cost $c$ in the spot market with cutoffs $\mb{c}$. It suffices to characterize some properties of the waiting times for arbitrary choices of cutoffs. Given a queuing system for the spot market, define the waiting time function $w(c; \mb{c})$ as the expected waiting time of a job with cost $c$ when jobs of class $i$ use cutoff $c_i$. Note that we are defining $w$ for arbitrary cutoffs, not just equilibrium ones.  We define $(c'_i,c_{-i})$ to be the vector obtained from $\mb{c}$ by replacing the  $i$th component by $c'_i$.

The following lemma gives the relevant properties of~$w$.
Intuitively, they capture that jobs with higher waiting costs get lower waiting times, if more jobs decide to enter the spot market then waiting times increase, and that entering jobs do not affect the waiting time of jobs with higher waiting costs.
\begin{lemma} \label{lemma:w-properties}
The waiting-time function $w(c;\mb{c})$ is well defined whenever \newline
$(\sum_i \lambda_i F_i(c_i))/(k\mu) < 1$.
  It is an increasing function of $\sum_{i} \lambda_i\pos{F_i(c_i) - F_i(c)}$. In particular, this implies:
\begin{enumerate} [(i)]
\item
$w(c;\mb{c})$ is decreasing in $c$ for $c \in [0, \max_i{c_i}]$, $w(c;\mb{c}) > 1/\mu$ if $c < \max_i c_i$, and $w(c; \mb{c}) = 1/\mu$ if $c \geq \max_i c_i$.
\item
$w(c;\mb{c})$ is increasing in $c_i$ for $c_i \in [0,\mu v_i]$.
\item
For any $t \geq  \wh{c}_j > c_j$  \newline
$w(t;\mb{c}) = w(t;(\wh{c}_j,\mb{c}_{-j}))$.
\end{enumerate}
\end{lemma}

\begin{proof}
The condition $(\sum_{i} \lambda_i F_i(c_i))/(k\mu) < 1$ ensures the queue is stable and hence expected waiting time is finite. 
 Since priority is given to jobs with higher waiting cost, the expected waiting time of a job with waiting cost $c$ increases with the total arrival rate of the jobs whose waiting cost is  higher than $c$, which is equal to $\sum_i \lambda_i\pos{F_i(c_i) - F_i(c)}$. The job with waiting cost greater than or equal to $\max_i c_i$ gets the highest priority and is served immediately with no interruptions. The enumerated properties follow easily.
\qed \end{proof}

Next, we use a characterization similar to the revenue equivalence theorem for auctions \cite{Myerson81} to  show that the expected payment by any job with waiting cost $c$ is uniquely determined by the waiting time function $w$; in particular, it is the same for any spot pricing mechanism.

Suppose that truthful reporting with cutoffs $\mb{c}$ constitutes a BNE for the given spot pricing mechanism. Let $m(c)$ be the expected payment made by a job with waiting cost $c$ (which is independent of its class). For a BNE to exist, the following \textit{incentive compatibility} (henceforth, IC) constraint must hold: for all $\wh{c}, c \leq \max_i c_i$, and any $i$,
\begin{equation} \label{eq:ic}
v_i - cw(c;\mb{c}) - m(c) \geq v_i - cw(\wh{c};\mb{c}) - m(\wh{c}). 
\end{equation}
By analogy with \cite{Myerson81}, the next lemma relates the expected payment with the waiting time function $w$ and shows that the properties of the waiting time function along with the expected payment given by \eqref{eq:payment} ensure that the IC constraint \eqref{eq:ic} is satisfied.

\begin{lemma} \label{lemma:ic-payment}
A necessary condition for \eqref{eq:ic} to hold is:
\begin{equation} \label{eq:payment}
m(c) = \int_0^c w(t;\mb{c})dt - cw(c;\mb{c}).
\end{equation}
Hence, the expected payment by a job with waiting cost $c$ is uniquely determined by the function $w$.
Moreover, Lemma \ref{lemma:w-properties} and \eqref{eq:payment} together satisfy the IC constraint \eqref{eq:ic}.
\end{lemma}
\begin{proof} [Sketch-  following  Myerson\cite{Myerson81}]
Let
  $$\pi(\wh{c}, c) \triangleq v_i - cw(\wh{c};\mb{c}) - m(\wh{c}).$$
 Then for the IC constraint to hold, the maximum  of  $\max_{\wh{c}}\pi(\wh{c}, c)$ must be achieved, and is achieved  at  $\wh{c} = c$.  Since $\pi$ as a function of $c$ is affine, it follows that $\pi(c,c)=\max_{\wh{c}}\pi(\wh{c}, c)$  is convex, and hence is differentiable almost everywhere,
with (right) derivative:
 \begin{equation}
\frac{\partial }{\partial c}\pi(c,c) =-w(c;\mb{c}).
\label{eq:diffpi}
\end{equation}
Integrating between 0 and $c$,  substituting for  $\pi$ and rearranging give the result under our assumption that $m(0) = 0$ (the job with zero waiting cost won't pay anything in the spot market because waiting is costless for it).\qed \end{proof}


Since $w(c;\mb{c})$ is decreasing in $c$ for $c \in [0,\max_i c_i] $, the proof of Lemma \ref{lemma:w-m-monotonicity} can be used to establish a stronger monotonicity result for the  expected payment $m$. 
\begin{lemma} \label{lemma:payment-inc}
Given cutoffs $\mb{c}$, the expected payment $m(c)$ is increasing in $c$ for $c \in [0,\max_i c_i]$.
\end{lemma}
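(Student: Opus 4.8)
The plan is to upgrade the weak monotonicity of Lemma~\ref{lemma:w-m-monotonicity} to strict monotonicity by exploiting that the waiting-time function is \emph{strictly} decreasing, and to treat the left endpoint $c=0$ separately via the payment formula \eqref{eq:payment}. Fix $0 \le c < \wh{c} \le c_1 \vee c_2$; the goal is $m(c) < m(\wh{c})$.

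The preliminary observation I would establish is that $w(\cdot;c_1,c_2)$ is strictly decreasing on $[0,c_1\vee c_2)$, not merely nonincreasing. By Lemma~\ref{lemma:w-properties} it is an increasing function of $\lambda_1\pos{F_1(c_1)-F_1(c)} + \lambda_2\pos{F_2(c_2)-F_2(c)}$, and since each $f_i$ is strictly positive on $[0,\mu v_i]$ each $F_i$ is strictly increasing; hence this aggregate is strictly decreasing in $c$ throughout $[0,c_1\vee c_2)$ (on the sub-interval where one $\pos{\cdot}$ term has already vanished, the other term still strictly decreases). Consequently $w(t;c_1,c_2) > w(\wh{c};c_1,c_2)$ for every $t \in [c,\wh{c})$.

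For interior points the proof of Lemma~\ref{lemma:w-m-monotonicity} transfers verbatim: the IC constraint \eqref{eq:ic} for a truthful type-$c$ job against the deviation of reporting $\wh{c}$ reads $c\,w(c;c_1,c_2) + m(c) \le c\,w(\wh{c};c_1,c_2) + m(\wh{c})$, so
\[
m(\wh{c}) - m(c) \;\ge\; c\,\bigl[w(c;c_1,c_2) - w(\wh{c};c_1,c_2)\bigr].
\]
The bracket is strictly positive by the previous paragraph, so the right-hand side is strictly positive whenever $c > 0$, which already gives the claim on $(0,c_1\vee c_2]$.

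The main (if small) obstacle is the endpoint $c = 0$, where this bound collapses to the weak statement $m(\wh{c}) \ge m(0) = 0$. To close it --- and in fact to obtain all cases uniformly --- I would use the representation \eqref{eq:payment} directly. Writing
\[
m(\wh{c}) - m(c) = \int_c^{\wh{c}} w(t;c_1,c_2)\,dt - \wh{c}\,w(\wh{c};c_1,c_2) + c\,w(c;c_1,c_2),
\]
and bounding $\int_c^{\wh{c}} w(t;c_1,c_2)\,dt > (\wh{c}-c)\,w(\wh{c};c_1,c_2)$ --- a strict inequality because the integrand exceeds $w(\wh{c};c_1,c_2)$ on all of $[c,\wh{c})$ --- yields $m(\wh{c}) - m(c) > c\,[w(c;c_1,c_2) - w(\wh{c};c_1,c_2)] \ge 0$. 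This single chain covers both $c = 0$ and $c > 0$ and establishes that $m$ is increasing on $[0,c_1\vee c_2]$.
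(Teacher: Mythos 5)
Your proof is correct and takes essentially the paper's route: the paper's entire proof is the remark that the IC argument of Lemma~\ref{lemma:w-m-monotonicity}, combined with the strict decrease of $w$ from Lemma~\ref{lemma:w-properties}, yields strict monotonicity of $m$, which is exactly your interior-point step. Your additional chain via \eqref{eq:payment} to cover the endpoint $c=0$ (where the bare IC bound degenerates to $m(\wh{c}) \geq m(0)$) is a careful refinement that the paper's one-line proof glosses over.
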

\begin{proof} 
 Suppose that the waiting cost of a job in class $i$ is $c$ and it instead misreports some $\wh{c} \neq c$. The expected payoff under truthful reporting is $\pi(c, c) \triangleq v_i - cw(c;\mb{c}) - m(c)$ and the expected payoff in case of misreport is $\pi(\wh{c}, c) \triangleq v_i - cw(\wh{c};\mb{c}) - m(\wh{c})$. Considering the cases $\wh{c} < c$ and $\wh{c} > c$ separately, using the property that $w(t;\mb{c})$ is decreasing in $t$, and using \eqref{eq:payment}, we can show that $\pi(c, c) - \pi(\wh{c}, c) > 0$.
\qed \end{proof}

\subsection{Revenue and equilibria for isolated markets} \label{sec:isolation}
Next, we analyze PAYG and the spot market each in isolation. 
\subsubsection{PAYG}
First consider PAYG in isolation. If the PAYG price is $p$, a job from class $i$ with waiting cost $c$ obtains an expected payoff $v_i - (p+c)/\mu$ by using a PAYG instance. A job will participate in PAYG if this payoff is nonnegative. Thus, a job from class $i$ participates in PAYG if its waiting cost $c \leq \mu v_i - p$. The effective arrival rate of class $i$ jobs is then $\lambda_i F_i(\mu v_i - p)$ where $F_i(\mu v_i - p) = 0$ if $p \geq \mu v_i$. Each such job uses a PAYG instance for an expected duration of $1/\mu$ and pays $p$ per unit time. Hence, the expected revenue to the cloud service provider per unit time, denoted by $R^{PAYG}(p)$, is:
\begin{equation} \label{eq:payg-rev}
R^{PAYG}(p) \triangleq \frac{p}{\mu}\bigg(\sum_i \lambda_i F_i(\mu v_i - p) \bigg),
\end{equation}
and the optimum revenue is $\max_{p}R^{PAYG}(p)$.
\subsubsection{Spot Market in Isolation}
For the spot market in isolation, 
 denote the cutoffs in this case by $\mb{c}^S$. From \eqref{eq:payment}, the expected payoff of a job from class $i$ with waiting cost $c$ is $v_i - \int_0^c w(t;\mb{c}^S)dt$. A job will participate in the spot market as long as its expected payoff is nonnegative. Hence, the cutoff vector $\mb{c}^S$ must satisfy: 
\begin{equation}
  \label{eq:spot-cutoffs-constraint}
v_i - \int_0^{c} w(t;\mb{c}^S)dt \left\{ 
\begin{array}{l l}
  \geq 0 & \quad \text{if $c < c_i^S$,}\\
  = 0 & \quad \text{if $c = c_i^S$.}\\
\end{array} \right.
\end{equation}
Theorem \ref{thm:spot-equilibrium} below shows that there is an unique cutoff vector $\mb{c}^S$ satisfying \eqref{eq:spot-cutoffs-constraint} which characterizes the BNE for the spot market in isolation. 
\begin{theorem} \label{thm:spot-equilibrium}
The following holds:
\begin{enumerate}[(i)]
\item
There is a unique solution $\mb{c}^S$ to the following system of equations in $\mb{x} = (x_1,\ldots,x_n)$:
\begin{equation}
\label{eq:spot-cutoffs-eqn}
\int_0^{x_i} w(t; \mb{x})dt = v_i.
\end{equation}

\item
In all BNE, a job from class $i$ with waiting cost $c$ participates in the spot market if and only if $c \leq c_i^S$.  
\end{enumerate}

\end{theorem}

%
%
%
%
%
The proof follows by straightforward inductive argument on the number of agent classes, and is given in Appendix \ref{sec:proof-spot-equilibrium}.
To highlight the explicit dependence of the expected payment on the cutoffs vector $\mb{c}^S$, we use $m(c;\mb{c}^S)$; i.e,
\begin{equation} \label{m-cutoff}
m(c;\mb{c}^S) = \int_0^{c}w(t; \mb{c}^S)dt - cw(c; \mb{c}^S).
\end{equation}
Using Theorem \ref{thm:spot-equilibrium}, the expected revenue to the cloud service provider per unit time when the spot market is operated in isolation, denoted by $R^{s}$, is:
\begin{equation} \label{eq:spot-rev}
R^{s} \triangleq \sum_i \lambda_i \int_0^{c_i^S} m(t;\mb{c}^S) f_i(t) dt.
\end{equation}

\subsection{Revenue and equilibria in the hybrid market} \label{sec:together}
We now leverage the insights gained from analyzing PAYG and the spot market each in isolation and move to analyzing the hybrid system where both are operated simultaneously. As mentioned in Section \ref{sec:strategy}, for a given PAYG price $p$, we look for a cutoff vector $\mb{c}(p)$ such that a job from class $i$ with waiting cost $c$ joins the spot market if and only if $c < c_i(p)$, and if so, it reports its waiting cost truthfully; otherwise it joins PAYG as long as $c \leq \mu v_i - p$ (the cutoff for class $i$ if PAYG is operating in isolation). 

A job from class $i$ with waiting cost $c$ gets the expected payoff $v_i - \int_0^{c}w(t; \mb{c}(p))dt$ from using a spot instance and reporting its waiting cost truthfully, while its expected payoff from using a PAYG instance is $v_i - (p+c)/\mu$. It will pick the one which offers a higher expected payoff. If the PAYG price is too high for a class, then no jobs from that class go to PAYG. Theorem \ref{thm:hybrid-equilibrium} below finds the unique cutoff vector $\mb{c}(p)$ and uses it to characterizes the BNE of the hybrid system.
The proof  proceeds along the same lines as that of Theorem~\ref{thm:spot-equilibrium} and is given in   Appendix \ref{sec:proof-hybrid-equilibrium}.  We also derive additional cutoff parameters $\mb{\bar{c}}$ which we require for analyzing the hybrid market, and which are useful as starting points for determining the optimal $\mb{c}^S$.

\begin{theorem} \label{thm:hybrid-equilibrium}
The following holds: 
\begin{enumerate}[(i)]
\item
For each $i \in \{1,2,...,n\}$ there is a unique vector of the form
$$\mb{x} = (x_i ,x_i, \ldots,x_i, x_{i+1}, x_{i+2},\ldots,x_n)$$
that satisfies  $\int_0^{x_j} w(t; \mb{x})dt = v_j$ for all $j \geq i$.  Let $\overline{c}_i$ be the value of $x_i$ in this vector.
\item
Define $v_{n+1} = \overline{c}_{n+1} = \overline{c}_0= 0$ and $v_{0} = \infty$.  Then there is a unique $i\in \{0,1,...,n\}$, denoted $i^*$ that is the class such that $p \in [\mu v_{i^*+1} - \overline{c}_{i^*+1}, \mu v_i - \overline{c}_{i^*})$.
\item
There is a unique solution $\mb{c}(p)$ to the system of equations that 
\begin{equation} \label{hybrid-cutoff-eqn}
\int_0^{x_j} w(t; \mb{x})dt =
\begin{cases}
\frac{p + x_j}{\mu} & \text{if  $j \leq i^*$}\\
v_j & \text{otherwise.}
\end{cases}
\end{equation}
\item
 In any BNE, a job from class $i$ with waiting cost $c$ participates in the spot market if and only if $c < c_i(p)$, it participates in PAYG if $c_i(p) \leq c \leq \mu v_i - p$. If $\mu v_i - p < c_i(p)$ then no class $i$ job participates in PAYG\footnote{It is assumed that jobs break ties between the spot market and PAYG in favor of PAYG.}.  In particular, this is true exactly for the classes such that $i > i^*$.
\end{enumerate}
\end{theorem}

%

%

This theorem also provides insight into the structure of the outcome.  For example, if follows from parts (ii), (iii) and (iv) that 
\begin{corollary}
All classes that participate in PAYG have the same cutoff, $c_{i^*}(p)$, and if the price is set higher than $\mu v_1 - \overline{c}_1 = \mu v_1 - c^S_1$ then no class participates in PAYG and the outcome is the same as if only the spot market existed.
\end{corollary}

Our analysis characterizes a truthful BNE for the system where PAYG and the spot market are operating simultaneously. This equilibrium can be implemented by assigning higher priority to the jobs with the higher waiting cost and collecting the payment according to \eqref{eq:payment}.

The expected revenue to the cloud service provider per unit time is the sum of expected revenue from the spot market and PAYG. From \eqref{eq:payg-rev}, \eqref{eq:spot-rev}, and Theorem \ref{thm:hybrid-equilibrium}, given a PAYG price $p$, the expected revenue per unit time for the hybrid system, denoted by $R^{H}(p)$, is:
\begin{multline} \label{eq:hybrid-rev}
R^{H}(p) \triangleq \sum_i \lambda_i \bigg( \frac{p}{\mu} \pos{F_i(\mu v_i - p) - F_i(c_i(p))} 
+ \int_0^{c_i(p)} m(t; \mb{c}(p))f_i(t)dt \bigg),
\end{multline}
and the optimum revenue is $\max_{p}R^{H}(p)$.
\section{Revenue Comparisons} \label{sec:simulations}
In the previous section, we characterized the equilibrium outcomes and resulting revenue for three different market types.  Now we compare their performance.  Since just having a spot market is a special case of the hybrid market, we focus on whether a cloud service provider should prefer PAYG or a hybrid market.
Perhaps the simplest question we can ask is whether PAYG or hybrid raises more revenue.  The next theorem shows that if the optimal price for the hybrid system is sufficiently small, PAYG in isolation can provide a higher expected revenue to the cloud service provider than operating PAYG and the spot market simultaneously.

\begin{theorem} \label{thm:hybrid-rev-comp}
Suppose the optimal price $p^H$ of the hybrid system is such that $p^H \leq \mu v_n - \overline{c}_n$, i.e., all classes participate in PAYG. Then the optimum expected revenue per unit time from PAYG in isolation is higher than the optimum expected revenue per unit time from the hybrid system; i.e., $\max_{p} R^{H}(p) = R^{H}(p^H) < \max_{p} R^{PAYG}(p)$.
\end{theorem}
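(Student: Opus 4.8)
The plan is to prove the stronger \emph{pointwise} statement that $R^{h}(p) < R^{payg}(p)$ for every price $p$ in the case-(i) range $(0,\mu v_2 - \overline{c}]$; evaluating this at $p = p^h$ and combining $\max_{p}R^{h}(p) = R^{h}(p^h)$ with the trivial bound $R^{payg}(p^h) \le \max_{p}R^{payg}(p)$ then yields the theorem. The underlying intuition is cannibalization: when both markets run at a common price $p$, the low-waiting-cost jobs that divert to the spot market pay strictly less there than the total price $p/\mu$ they would have paid under PAYG, so PAYG in isolation at the \emph{same} price collects strictly more from exactly the same population of participating jobs.

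The first and central step is the identity $m(c_i(p);\mb{c}(p)) = p/\mu$ at the common cutoff. In case (i) of Theorem~\ref{thm:hybrid-equilibrium} both cutoffs coincide, $c_1(p) = c_2(p) = x$, where $x$ solves $(p+x)/\mu = \int_0^{x} w(t;x,x)\,dt$. Since $x = x \vee x$, Lemma~\ref{lemma:w-properties}(i) gives $w(x;x,x) = 1/\mu$, so from \eqref{eq:m-cutoff},
\[
m(x;(x,x)) = \int_0^{x} w(t;x,x)\,dt - x\cdot\tfrac{1}{\mu} = \tfrac{p+x}{\mu} - \tfrac{x}{\mu} = \tfrac{p}{\mu}.
\]
Thus the marginal spot-market participant pays exactly the PAYG total price, and by the strict monotonicity of $m$ (Lemma~\ref{lemma:payment-inc}) every interior participant pays strictly less: $m(t;\mb{c}(p)) < p/\mu$ for all $t \in [0,x)$.

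Because $c_i(p) = x > 0$ in case (i) and the densities $f_i$ are strictly positive on $[0,\mu v_i]$, integrating this strict inequality against $f_i$ gives, for each class $i$,
\[
\int_0^{c_i(p)} m(t;\mb{c}(p))\, f_i(t)\,dt < \frac{p}{\mu}\,F_i(c_i(p)).
\]
I would then substitute this into the hybrid revenue formula \eqref{eq:hybrid-rev}. In case (i) we have $c_i(p) \le \overline{c} \le \mu v_2 - p < \mu v_1 - p$ (the last step from $v_1 > v_2$), so each positive part $\pos{F_i(\mu v_i - p) - F_i(c_i(p))}$ equals the bare difference $F_i(\mu v_i - p) - F_i(c_i(p)) \ge 0$. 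Hence the class-$i$ term of $R^{h}(p)$ is bounded above by $\tfrac{p}{\mu}\big(F_i(\mu v_i - p) - F_i(c_i(p))\big) + \tfrac{p}{\mu}F_i(c_i(p)) = \tfrac{p}{\mu}F_i(\mu v_i - p)$, and summing over the two classes with weights $\lambda_i$ reproduces exactly $R^{payg}(p)$ of \eqref{eq:payg-rev}. This delivers the pointwise strict inequality and hence the theorem.

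The argument is short once the marginal-payment identity is in hand, so I expect no single hard obstacle, only two points requiring care: (a) establishing $w(x;x,x) = 1/\mu$ precisely at the boundary $c = c_1 \vee c_2$, which is what pins the marginal payment to $p/\mu$ rather than merely below it; and (b) justifying that the positive-part operators in \eqref{eq:hybrid-rev} may be dropped, which is exactly the inequality $c_i(p) \le \mu v_i - p$ guaranteed by the case-(i) hypothesis $p \le \mu v_2 - \overline{c}$ together with $v_1 > v_2$. The strictness throughout relies on $c_i(p) > 0$ and $f_i > 0$, both available in case (i).
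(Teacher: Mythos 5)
Your proposal is correct and follows essentially the same route as the paper's proof: both hinge on the identity $m(c_i(p);\mb{c}(p)) = p/\mu$ at the common cutoff (obtained from $w(c_i(p);\mb{c}(p)) = 1/\mu$ together with the cutoff's defining/indifference equation), then invoke the strict monotonicity of $m$ from Lemma~\ref{lemma:payment-inc} to conclude interior spot participants pay strictly less than $p/\mu$, and finally compare \eqref{eq:payg-rev} with \eqref{eq:hybrid-rev} after dropping the positive parts via $c_i(p) \leq \overline{c} \leq \mu v_i - p$. Your only deviations—stating the inequality pointwise for all $p$ in the case-(i) range rather than just at $p^h$, and explicitly flagging that strictness needs $c_i(p) > 0$ and $f_i > 0$—are minor refinements of the same argument, not a different approach.
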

\begin{proof}
Since $\max_{p} R^{PAYG}(p) \geq R^{PAYG}(p^H)$,
it suffices to show that $R^{PAYG}(p^H) > R^{H}(p^H).$

Since $p^H \leq \mu v_n - \overline{c}_n$, then for all $i$ and $j$, $c_i (p^H)  = c_j (p^H) \leq \overline{c}_n$, implying $\mu v_i - p^H \geq \overline{c}_n \geq c_i(p)$. Then from \eqref{eq:payg-rev} and \eqref{eq:hybrid-rev},
\begin{multline} \label{eq:hrc-eq2}
R^{PAYG}(p^H) -  R^{H}(p^H) = \sum_i \lambda_i \bigg( \frac{p^H}{\mu} F_i(c_i(p^H)) 
- \int_0^{c_i(p^H)} m(t; \mb{c}(p^H))f_i(t)dt \bigg).
\end{multline}
At $c = c_i(p^H)$, a job is indifferent between PAYG and the spot market. Hence, 
\begin{equation}
c_i(p^H) w(c_i(p^H); \mb{c}(p^H)) + m(c_i(p^H); \mb{c}(p^H)) = \frac{c_i(p^H) + p^H}{\mu}.
\end{equation}
Since $c_i (p^H) = c_j(p^H)$, $w(c_i(p^H); \mb{c}(p^H)) = 1/\mu$. Hence, \newline $m(c_i(p^H); \mb{c}(p^H)) = p^H/\mu$. From Lemma \ref{lemma:payment-inc}, $m(t; \mb{c}(p^H))$ is increasing in $t$ for $t \in [0, c_i(p)]$. This and \eqref{eq:hrc-eq2} imply:
\begin{multline} \label{eq:hrc-eq3}
R^{PAYG}(p^H) -  R^{H}(p^H) > \\
\sum_i \lambda_i \bigg( \frac{p^H}{\mu} F_i(c_i(p^H)) - \int_0^{c_i(p)} \frac{p^H}{\mu} f_i(t)dt \bigg) = 0.
\end{multline}
\qed
\end{proof}

The intuition behind Theorem \ref{thm:hybrid-rev-comp} is that, in a class that participates in PAYG, all jobs of that class that instead choose the spot market would prefer PAYG to balking.  Since they pay less money (but more waiting time) to use the spot market, we could make more money if we could prevent them from entering the spot market.  When this is true of every class, we can actually prevent them, by simply eliminating the spot market.

Obviously Theorem \ref{thm:hybrid-rev-comp} has significantly more bite with a small number of classes, since it requires that all participate in PAYG.  However, we note that for Amazon only a small percentage of jobs are submitted to the spot market, so this may well be the relevant case.  Further, we conjecture that the revenue ranking result holds much more broadly.  As an example, we simulate the performance of a spot market that consists of $k$ parallel $M/M/1$ queues with two classes, where  jobs bid for preemptive priorities.
An arriving job is randomly and uniformly sent to one of the $k$ queues where it is served according to its priority order, determined by its bid, in that queue. 
As shown in prior work~\cite{Lui1985},
the waiting time is given by 
\begin{equation}\label{eq:w-example}
w(c;c_1,c_2) = \frac{1}{\mu\big(1- \displaystyle \sum_{i = 1,2}\rho_i\pos{F_i(c_i)- F_i(c)}\big)^2},
\end{equation}
where $\rho_i \triangleq \lambda_i/(k \mu)$.
The proof of Theorem \ref{thm:hybrid-equilibrium} provides a recipe for numerically computing the cutoff vector $\mb{c}(p)$ as a function of PAYG price $p$.
  
We randomly generated one hundred random configurations of the values of $v_i$'s, $\lambda_i$'s, and $k$.
All were chosen uniformly at random from $[0,20]$, with $k$ a random integer from this range.
The service rate $\mu$ was kept constant at one and $F_i$ was uniform in the interval $[0, \mu v_i]$.


 For each realized configuration, we observed that the optimal revenue from PAYG in isolation was always higher than the optimal revenue from the hybrid system where PAYG and the spot market are operating simultaneously, even for the case where the optimal price $p^H$ of the hybrid system is greater than $\mu v_2 - \overline{c}_2$.
An example plot where $p^H > \mu v_2 - \overline{c}_2$
is shown in Figure \ref{fig:payg-spot-plot}%
\footnote{The exact parameter values are
$v_1 = 10.508088077186715$,
$v_2 = 1.876400535497311$,
$\lambda_1 = 0.7576345977040905$,
$\lambda_2 = 1.2301997305619036$,
and
$k = 1$.}.
Observe that if PAYG price is low, most of the jobs in the hybrid system use PAYG and pay a small price, leading to a small expected revenue. As PAYG price increases, jobs move to the spot market,
reaching a point where all jobs use the spot market.
At $p=\mu v_2$, all class $2$ jobs balk from PAYG leading to a kink in the plot for PAYG in isolation. However, it is not the case that PAYG raises more revenue at all prices $p$, as near this kink the hybrid market raises slightly more revenue.
Additional simulations with exponential and beta distributed waiting costs also failed to generate instances where the hybrid market generated more revenue.  This suggest that for a broad range of parameters not operating a spot market is optimal.
\begin{figure}[t] 
\begin{center}
\includegraphics[trim=0.5in 2.75in 0.50in 3.0in, clip=true, height=2.6in]{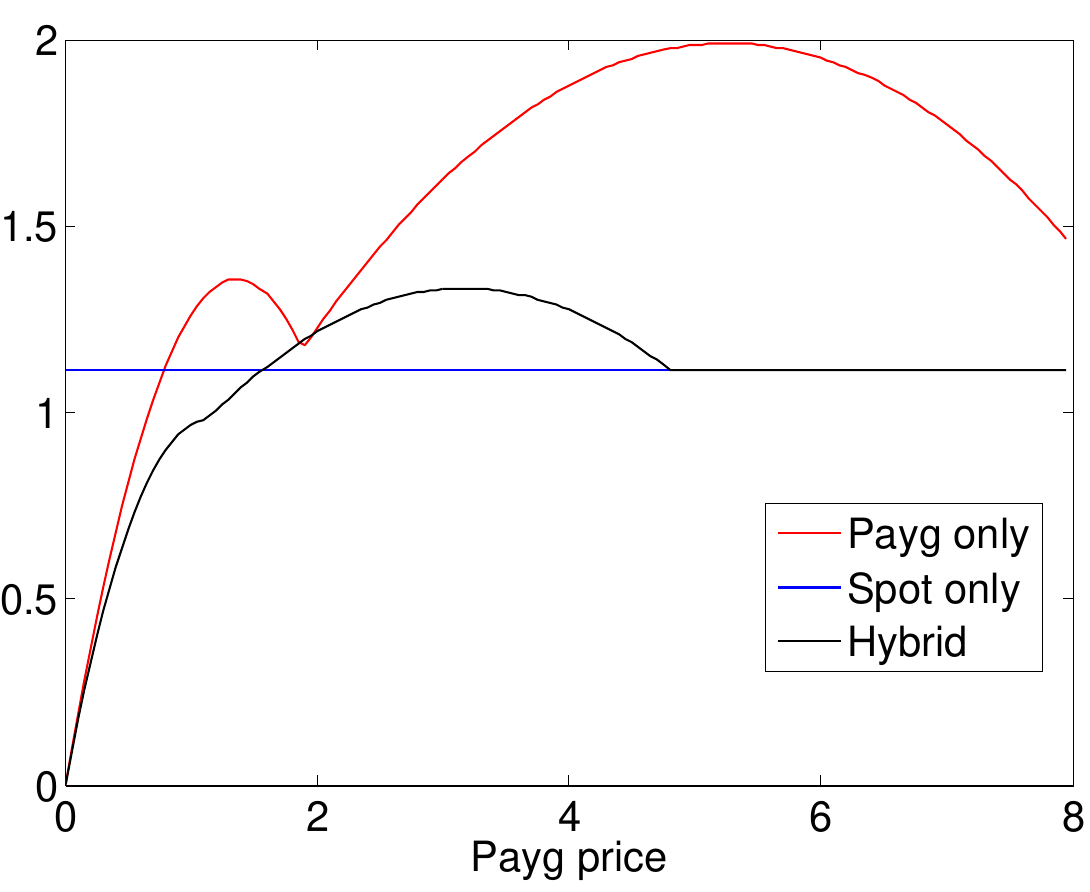}
\caption{Expected revenue $R^{PAYG}(p), R^H(p), R^S(p)$ for PAYG, Hybrid and Spot respectively  as a function of PAYG price $p$ for fixed $\lambda_1\approx0.76, \lambda_2\approx1.23, v_1 \approx 10.5, v_2 \approx 1.88$ with $k=1$ and uniform waitings cost distributions.\label{fig:payg-spot-plot}} 
\end{center}
\end{figure}

We now explore a situation where  intuitively a hybrid scheme should outperform PAYG,  specifically where there are two types of jobs:  high value jobs  with high waiting time costs,  and   low value jobs having low waiting time costs.  A hybrid mechanism where high value jobs use PAYG, and low value use the spot market might be expected to generate more revenue than setting a single PAYG price.   We shall see that this is still often not the case, essentially for two reasons:
\begin{enumerate}
\item  In any hybrid scheme, if the price is such that only the high class participates in PAYG ($i^*=1$) some high value traffic will always use the Spot Market.  This is because it follows from \eqref{hybrid-cutoff-eqn}  in Theorem~\ref{thm:hybrid-equilibrium} that $c_1(p) > 0$, hence $F_1(c_1(p))>0$.
\item   Under the same setting, not all the low value traffic can use the spot market:  by the proof of Lemma~\ref{lem:unique-solution} in the appendix $c_2(p)$ is decreasing in $p$ in this range.  Thus  $c_2(p)<v_2$, and hence $F_2(c_2(p))<1$.
\end{enumerate}

In other words we cannot achieve perfect separation of the two classes by price alone.  We use 
Beta distributions to examine this example in more detail. Consider a setting where there is one type with high value ($v_1 = 10$) and high waiting costs while another with low value $v_2$ and low waiting costs.  Figure~\ref{fig:pdfs} shows such an example where the waiting cost for the low value type is essentially all between 0 and 1 and the waiting costs for the high value type are essentially all slightly less than 5.  The other parameters are $k = 10$ and $\lambda_1 = \lambda_2 = 5$.   We now explore what happens as we vary $v_2$ between 3 and 4, settings where the optimal PAYG price excludes type 2 traffic, and hence Theorem \ref{thm:hybrid-rev-comp} does not apply.
Intuitively, we could set a PAYG price close to 5 to try and capture as much revenue from the high values as possible while getting some revenue from the low values.
Figure~\ref{fig:beta-rev} shows that when $v_2=3$  even in this setting PAYG alone is still optimal (though only very slightly).  The essential problem is that the high PAYG price gives class 1 jobs an incentive to drop down into the spot market, wiping out some of the gains from serving the class 2 jobs.  It does suggest that the hybrid system can be more robust in some cases if there is uncertainty about how best to set prices.  However, this is an example created to make the hybrid system look as good as possible; in Figure~\ref{fig:payg-spot-plot}, which is much more representative of the examples generated in our simulations, the hybrid system does not add such robustness.

If we increase $v_2$, putting $v_2=3.5$, then as shown in Figure \ref{fig:beta-rev2},  the hybrid market does indeed generate (slightly) more revenue that PAYG.    But if $v_2$ is increased still further to $v_2=4$, then once again PAYG generates more revenue.
For this parameterized model, the hybrid model is only better if approximately $3.1 \leq v_2 \leq 3.8$, whereas PAYG generates more revenue if approximately $v_2 \leq 3$ or $v_2 \geq 3.9$. Similarly, if  we fix all parameters except $\lambda_2$, there is a small interval $\mathcal{I}$,  when  hybrid is optimal for $\lambda_2 \in \mathcal{I}$, and PAYG optimal for all other $\lambda_2$.  Thus, even in an example designed to make the hybrid mechanism look as good as possible, there is a relatively small range of parameters where it is superior.
\begin{figure}[t] 
\begin{center}
\includegraphics[trim=0.75in 3.25in 0.75in 3.25in, clip=true, height=2.6in]{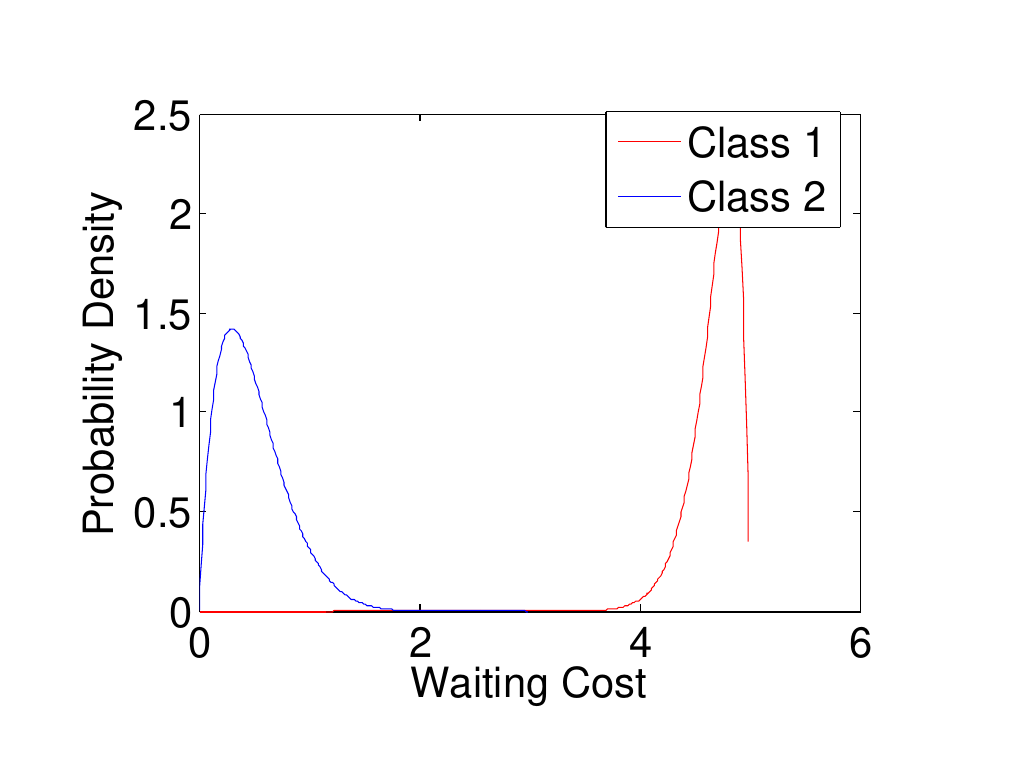}
\caption{Exanple pdfs $f_i(c)$ of well-separated waiting costs.\label{fig:pdfs}} 
\end{center}
\end{figure}

\begin{figure}[t] 
\begin{center}
\includegraphics[trim=0.75in 3.25in 0.75in 3.25in, clip=true, height=2.6in]{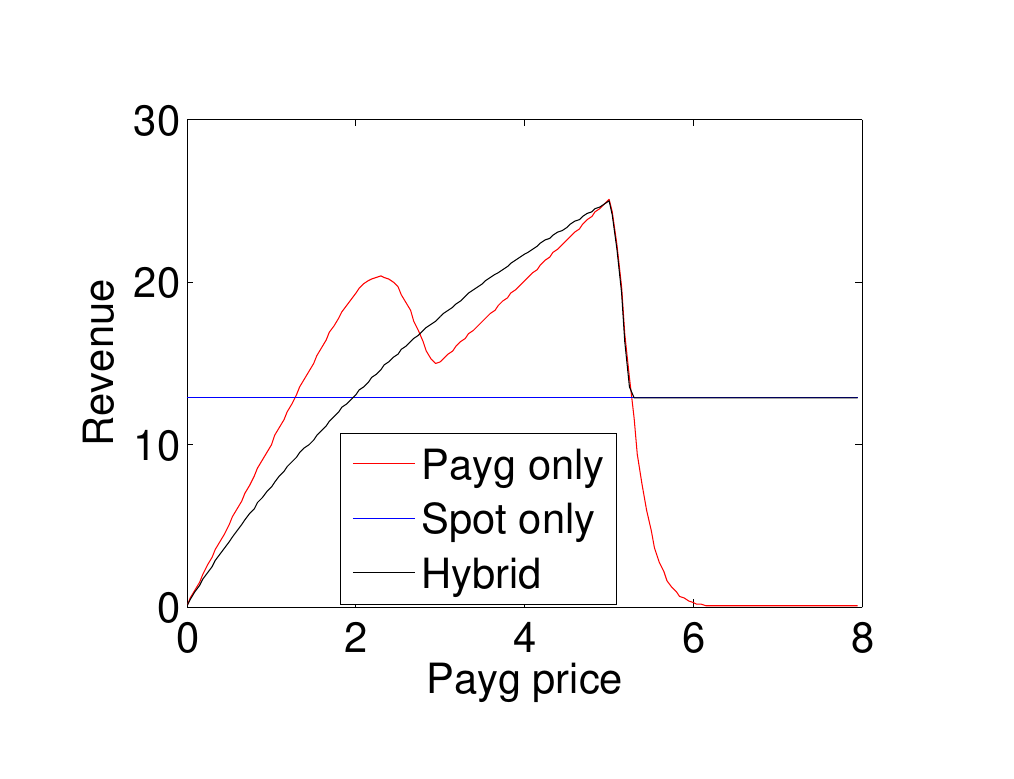}
\caption{Expected revenue $R^{PAYG}(p), R^H(p), R^S(p)$ for PAYG, Hybrid and Spot respectively  as a function of PAYG price $p$ for fixed $\lambda_1= \lambda_2=5$, $k=10$,  $v_1=10$ and $v_2=3$. Here  PAYG is  (just) optimal.  Waiting cost pdfs $f_i(c)$ as in Figure \ref{fig:pdfs}.  \label{fig:beta-rev}} 
\end{center}
\end{figure}

\begin{figure}[t] 
\begin{center}
\includegraphics[trim=0.75in 3.25in 0.75in 3.25in, clip=true, height=2.6in]{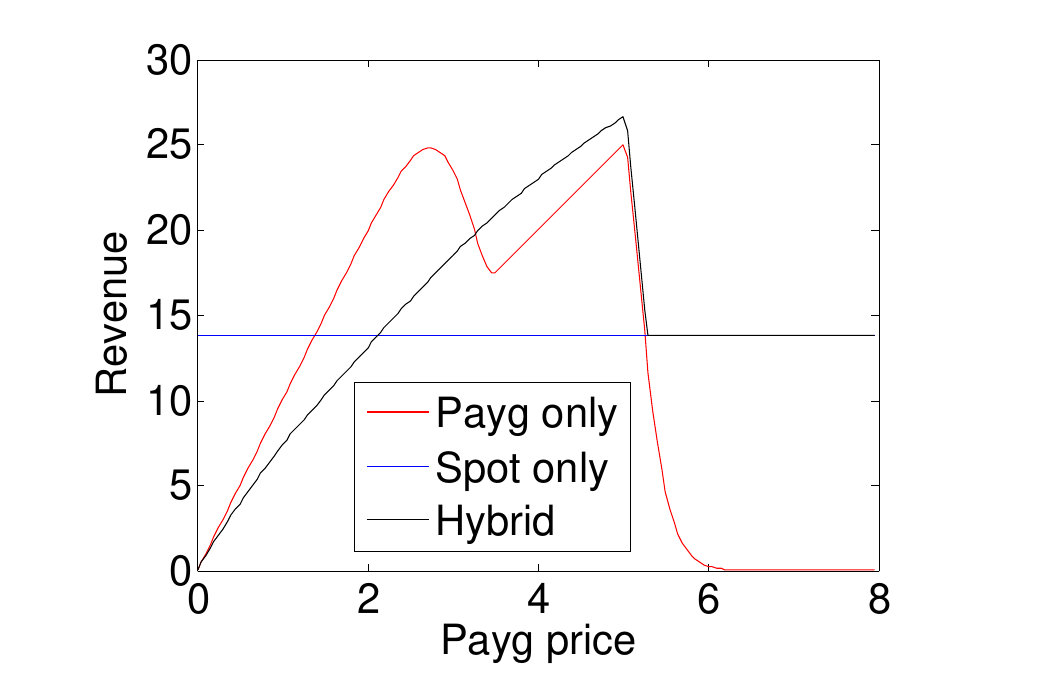}
\caption{Expected revenue $R^{PAYG}(p), R^H(p), R^S(p)$ for PAYG, Hybrid and Spot respectively  as a function of PAYG price $p$ for fixed $\lambda_1= \lambda_2=5$ , $k=10$,  $v_1=10$, as in Figure \ref{fig:beta-rev}, but with  $v_2=3.5 $. Now the Hybrid scheme is optimal. \label{fig:beta-rev2}} 
\end{center}
\end{figure}

\section{Welfare Analysis}
\label{sec:welfare}

The two systems also provide different social welfare (i.e. the value of served jobs minus their waiting costs). For classes that participate in both PAYG and spot, agents that send their job to the spot market incur a higher waiting cost, which reduces welfare.  On the other hand, classes that do not participate in PAYG receive some service in a spot market, which increases welfare.  If the service provider wishes to raise a given amount of revenue, Figure~\ref{fig:payg-spot-plot} shows that he would pick a different (lower) price, which increases welfare.

To help characterize social welfare, we now show that the outcome of the hybrid system is, in a sense, efficient.
Economic efficiency is the property of maximizing social welfare, the total utility in the system.  Since payments are just a transfer of utility from one agent (the job) to another (the owner of the system), they are irrelevant.
It has previously been observed for restricted cases that spot market outcomes are efficient (e.g.~\cite{Hassin1995}).  We show that this is true in general, but only if we treat the PAYG price as a real cost.  Hence we say that the outcome is \emph{pseudo-efficient}, because it would be efficient if the PAYG price represented a real cost.

\begin{theorem}
\label{thm:tradeoff}
The equilibrium of the hybrid system is pseudo-efficient: given the PAYG price $p$, agents make the socially optimal decision in determining whether to send their job to the spot market, send it to PAYG, or balk.  Equivalently, this mechanism implements the VCG outcome.  Thus, the payment of an agent is his (expected) externality 
\end{theorem}

\begin{proof}
Suppose we use the VCG mechanism, which selects the pseudo-efficient outcome and charges each agent his externality.  In this outcome a given class of job must have all jobs with waiting costs below a cutoff sent to the spot market.  Otherwise, jobs with a lower waiting cost that are not sent to it could be swapped with jobs with a higher waiting cost, increasing social welfare.  Consider the marginal effect of admitting a job at the cutoff to the spot market for any class.  Since the outcome is pseudo-efficient, the externality this causes plus its own waiting cost must be equal either to the cost it would experience under PAYG (if it would otherwise go there) or to its value (if it would otherwise balk).  Since VCG is incentive compatible, we know that its payment (equal to the externality) satisfies \eqref{eq:payment}.  But then the equations that define the cutoffs of the spot market are exactly the same as those from \eqref{hybrid-cutoff-eqn}, whose unique solution is the outcome of the hybrid system.
\qed
\end{proof}

\begin{figure}[t] 
\begin{center}
\includegraphics[trim=-0.25in 2.75in 0.50in 3.0in, height=2.6in]{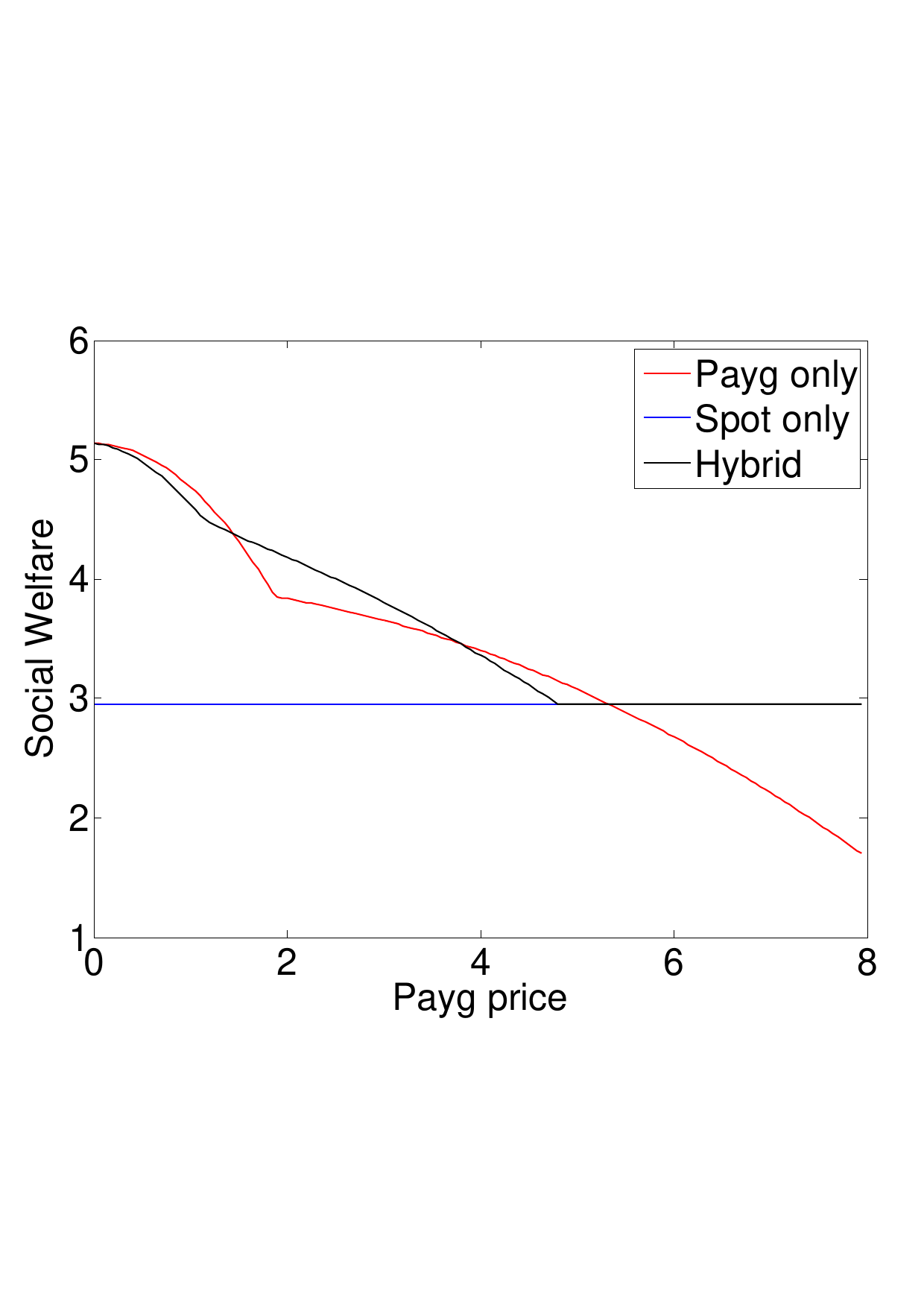}
\caption{Social welfare $W^{PAYG}(p), W^S(p), W^H(p)$  as a function of PAYG price for Figure \ref{fig:payg-spot-plot} parameter settings .\label{fig:payg-spot-sw}} 
\end{center}
\end{figure}

\begin{figure}[t] 
\begin{center}
\includegraphics[trim=-0.25in 2.75in 0.50in 3.0in,  height=2.6in]{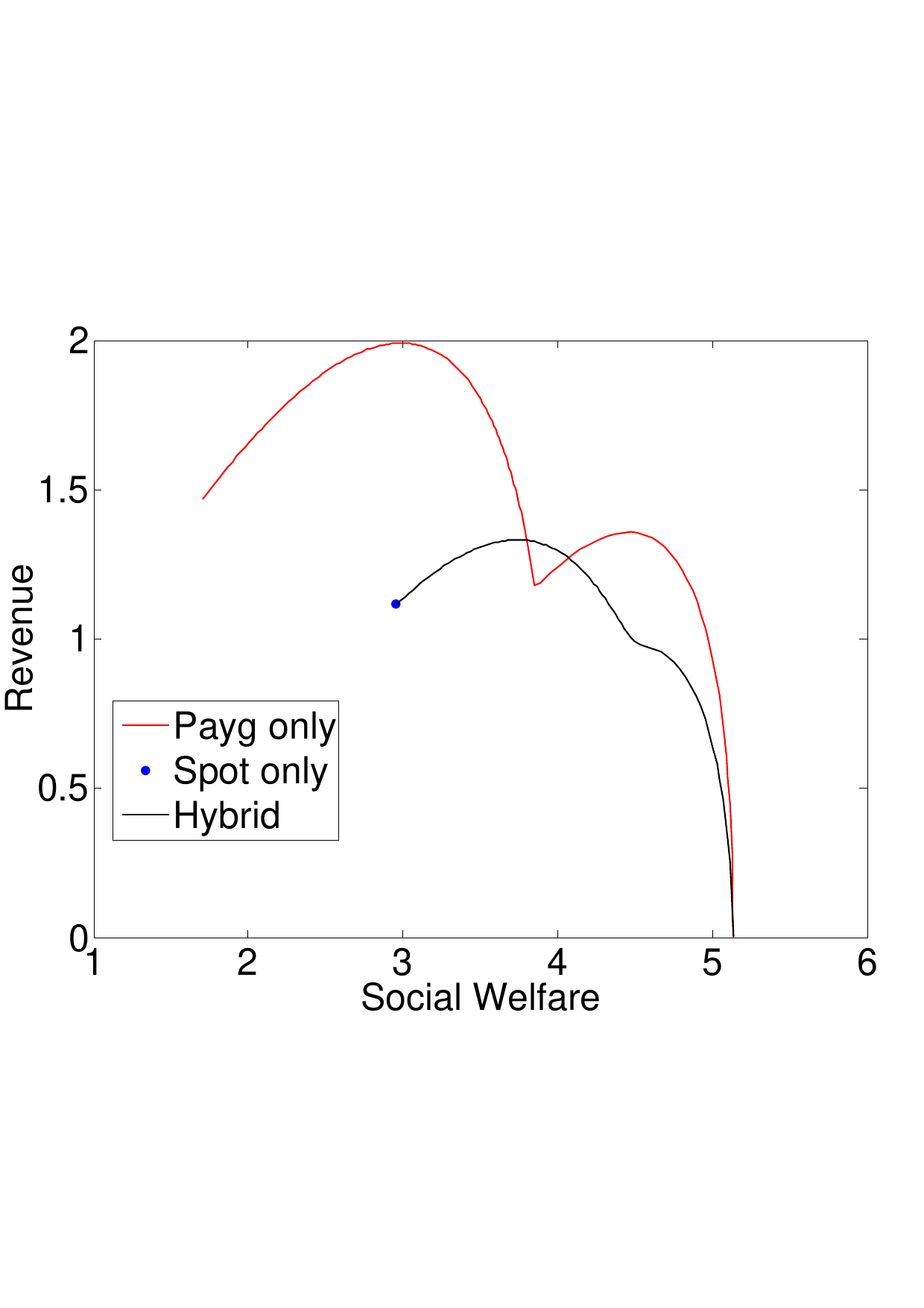}
\caption{Tradeoff between revenue $R(p)$ and welfare  $W(p)$   for Figure \ref{fig:payg-spot-plot} parameter settings.\label{fig:payg-spot-tradeoff}} 
\end{center}
\end{figure}

Given a PAYG price $p$, the expected social welfare per unit time for the hybrid system, denoted by $W^{H}(p)$, is:
\begin{multline} \label{eq:hybrid-welf}
W^{H}(p) \triangleq \sum_i \lambda_i \bigg( \int_{c_i(p)}^{\mu v_i - p}\left(v_i - \frac{t}{\mu}\right )f_i(t) dt   
+ \int_0^{c_i(p)}\left (v_i -  w(t; \mb{c}(p)\right )f_i(t)dt  \bigg).
\end{multline}

Figure~\ref{fig:payg-spot-sw} illustrates the effect of prices on social welfare in the same example from Figure~\ref{fig:payg-spot-plot}.  For much of the range of prices, the welfares of PAYG and the hybrid sytem are similar, with the better solution changing several times.  Once prices are high enough that no one participates in PAYG in the hybrid system, social welfare in the hybrid system is higher; because PAYG tends to extract more revenue at a given price, Figure~\ref{fig:payg-spot-tradeoff} shows that, at least in this example, PAYG enables a better tradeoff interesting most operating points.  In particular, while there are choices of social welfare for which the hybrid market raises more revenue, for every such choice there is a choice with higher social welfare where the hybrid market simultaneously achieves more revenue (i.e. a Pareto improvement in economic terminology).  So in this example, PAYG not only raises more revenue but enables better tradeoffs between revenue and social welfare.  Gaining a theoretical understanding of these tradeoffs is an important question for future work.

\section{Relaxing Model Assumptions}
%
We now consider extensions to our model, which relax some of our assumptions.

\subsection{Finite Capacity PAYG}
\label{sec:finiteCap}
We have modeled PAYG as infinite capacity system, whereas the spot market is has a finite capacity ($k$) servers.  This dichotomy is intended to represent current cloud-computing markets, where IaaS is primarily sold
using a PAYG model with availability guarantees, whereas the spot market is much smaller.  However, we believe that with respect to our results, the assumption that PAYG is modeled as a $GI/GI/\infty$ queue, rather than (say) a $GI/GI/n/n$ queue
is essentially without loss of generality.

If PAYG has a finite capacity, then in order to give service level guarantees, there is some arrival rate $\lambda^*$ and associated load $\rho^*$ such that the  performance of PAYG meets service level guarantees for all $\lambda \leq \lambda^*$.  For example, by choosing $\lambda^*$ such that $\Pr\{\mbox{arriving jobs sees all $n$ servers busy} |\lambda=\lambda^*\}\leq \epsilon$, for some small $\epsilon\approx 0$.    Provided $\lambda \leq \lambda^*$,  the behavior of the PAYG queue will approximated by $GI/GI/\infty$ queue.
 For $\lambda>\lambda^*$,  with a finite capacity queue, certain jobs would not be admitted (c.f \cite{AgmonBen-Yehuda:2013} for some  information on Amazon EC2 PAYG policies).
Thus, we could incorporate this in our model by assuming that there is a small probability that jobs submitted to PAYG are not served and pay cost 0.  This slightly reduces revenue from PAYG and the PAYG part of hybrid and results in classes choosing slightly higher cutoffs in Theorem~\ref{thm:hybrid-equilibrium}, but does not materially affect our results.  

In more detail,  let $B(\rho,n)$ denote the probability that an arriving job sees all servers busy in an  $GI/GI/n/n$ queue, which serves PAYG customers.  Define $\rho^{PAYG}(p)=\frac{1}{\mu}\sum_i \lambda_i F_i(\mu v_i - p) $, and let $\epsilon=B(\rho^{PAYG}(p),n)$.   Then Theorem~\ref{thm:hybrid-equilibrium} holds as stated, but with the BNE in part (iv) replaced by $\epsilon^*$-BNE, where $\epsilon^*$ is upper bounded by $ v_i (1- p/ \mu) \times  \epsilon$.
 Compared to $n=\infty$, $R^{PAYG}(p)$  decreases by $p \rho^{PAYG}(p) \times B(\rho^{PAYG}(p),n)$, while $R^H(p)$ decreases by $p \rho^{H}(p) \times  B(\rho^H(p),n)$,   where $\rho^S(p)\triangleq \frac{1}{\mu} \sum_i \lambda_i F_i(c_i(p))$, $\rho^H(p)=\rho^{PAYG}(p)-\rho^S(p)$;  both decrements are $O(\epsilon)$ under our assumptions that $\rho^s \ll \rho^{PAYG}$.

Alternatively Theorem 2 can be amended to allow BNE rather than $\epsilon^*$-BNE on replacing \eqref{hybrid-cutoff-eqn} with 
\begin{equation*}
\int_0^{x_j} w(t; \mb{x})dt =
\begin{cases}
\frac{p + x_j}{\mu} (1- B(\rho^H(p),N)+ v_j B(\rho^H(p),N) & \text{if  $j \geq i$}\\
v_j  & \text{otherwise,}
\end{cases}
\end{equation*}
and using this to define the unique $\mb{c}(p)$  and  $\overline{c}_{i^*}$  via (ii). The finite capacity 
results in an $O(\epsilon)$  increase in the cut-offs.   Notice that now there is weak $O(\epsilon)$ dependence of the cut-offs on the class, and hence payments also will have this weak class dependence.  However, these do not affect our results that in general  PAYG raises more revenue.

\subsection{Reserve Prices}
\label{sec:reserve}

In our basic model, we assumed that $m(0) = 0$.  This means that there is no minimum payment (reserve price), so jobs willing to tolerate a long enough wait can be served for free.  Imposing such a reserve price is typically a way to raise revenue, so intuitively adding them to the spot market seems beneficial and potentially a way to make the hybrid market more profitable than PAYG alone.  Indeed, it is at least weakly better because a PAYG is the special case of a hybrid market with a reserve price equal to the PAYG price.  In this section, we explain why this extra option need not actually help.

First, we explain how our characterizations from Theorems~\ref{thm:spot-equilibrium} and~\ref{thm:hybrid-equilibrium} change in the presence of reserve prices.  If we institute a reserve price of $m(0) = r$, then the incentive compatibility constraint given by Lemma~\ref{lemma:ic-payment} becomes
$$m(c) = r + \int_0^c w(t;\mb{c})dt - cw(c;\mb{c}).$$
With this change, we need to change the systems of equations based on indifference between the spot market and PAYG or balking.  Specifically, \eqref{eq:spot-cutoffs-eqn} becomes
$$\int_0^{x_i} w(t; \mb{x})dt = v_i - r,$$
while \eqref{hybrid-cutoff-eqn} becomes
\begin{equation*}
\int_0^{x_j} w(t; \mb{x})dt =
\begin{cases}
\frac{p + x_j}{\mu} - r & \text{if  $j \geq i$}\\
v_j - r & \text{otherwise.}
\end{cases}
\end{equation*}
The same proofs go through, {\em mutatis mutandis}.

This new characterization, shows that adding a reserve price doesn't change anything significant about the behavior or jobs.  The spot market is now expensive, so fewer jobs use it with the rest either choosing PAYG or balking, but this structure is still determined by cutoffs as before.  Thus it is still the case that, for every class with jobs that participate in both PAYG and the spot market, all those jobs would have been willing to pay the higher PAYG price.  This is the main insight behind Theorem~\ref{thm:hybrid-rev-comp}, so the proof of that theorem goes through as well.  Thus, if the optimal combination of PAYG price and reserve price for a hybrid system is such that all classes of jobs participate in PAYG, then the reserve price is equal to the PAYG price and the system is effectively PAYG alone.

Furthermore, even in ranges of parameters where a hybrid market is superior, this analysis suggests there is an option that is better than either: offer only PAYG, but with a menu of prices where lower prices receive a larger (artificial) delay before (an approach explored by Af\`{e}che~\cite{Afeche2004}).  By way of intuition if the menu of prices and delays are chosen to be the payments and expected waiting time at the various cutoff values for different classes in the spot market, all jobs of a class that were making less than their cutoff payment would still be willing to pay the higher amount rather than balk.  Indeed, in the example of Figure~\ref{fig:beta-rev2} where a hybrid market can raise revenue from 25.0 to 26.6, such a scheme can generate revenues upwards of 33.  Such artificial delays are not without precedent: among other restrictions the Amazon Glacier cloud storage system may delay responding to requests for up to 3 to 5 hours.
Further, subsequent to our work Google launched ``Preemptible VMs'' which take exactly this approach of offering a fixed discount for the risk of preemption. 

This analysis is in stark contrast to that of Doroudi et al.\cite{Doroudi-et-al2013}, who found that a spot market with a reserve price is optimal in their setting (and in particular superior to PAYG).  However, their result relies on PAYG having the same resource constraints as the spot market does, a feature we have argued is not present in current cloud systems.

These results are under our Incentive Compatible allocation assumption, which implies that jobs are able to discover enough information from the system for it to be optimal to bid truthfully.   This implies that any reserve price must be public knowledge or discoverable.   Ben-Yehuda et al.~\cite{AgmonBen-Yehuda:2013} argue that Amazon EC2 spot prices appears to use  a dynamic or random reserve price.  Equivalently, this result could be due to having a limited supply of spot instances that is variable over time.  Such opacity can cause agents (jobs) to be unsure what is their best report, thus making the mechanism (and pricing) not incentive compatible. 

\subsection{Spot Market Costs}

Another variant of our model would be to assume that participation in the spot market is costly.  In practice, designing systems to backup and resume work from checkpoints may require additional effort, so being preempted may not actually be costless the way our basic model assumes.  If all classes of jobs must pay the same cost to participate in the spot market, then this essentially serves as a reserve price, except that it simply represents a loss of efficiency rather than an increase in revenue.  If the cost is not the same for all classes, the situation is more interesting.  In particular, if the cost is higher for classes with higher values, this opens up an opportunity to discriminate between classes that can make a hybrid market clearly profitable.  Such ``damaged goods'' approaches to market segmentation are common across a variety of markets~\cite{McAfee07}.

Suppose there are only two classes with $v_1 > v_2$ and class 1 is required to pay a cost of $s_1$ to participate in the spot market.
Our results are easily adapted to the general  case of indirect spot-market participation costs: for example, in 
 \eqref{eq:spot-cutoffs-constraint} and in Theorem \ref{thm:spot-equilibrium} $v_i$ is simply replaced by $v_i-s_i$.
However, now it is not necessarily the case that $c_1(p) > c_2(p)$.  To see this, take the extreme case of $s_1 = v_1$, in which case jobs of class 1 will not participate in the spot market while jobs of class 2 will still do so.  Thus, a high PAYG price could be set to optimally extract revenue from jobs of class 1, and even if this price is higher than $\mu v_2$ some revenue would still be extracted from the jobs of class 2 in the spot market with no jobs of class 1 defecting to it.

As related differentiation occurs when the support of the waiting cost is $[s_i, \mu v_i]$ rather than $[0,\mu v_i]$.  Our analysis goes through using obvious alterations to the lower limits of integrals (equivalently, putting $f_i(t)=0$ for $t \in [s_i,0$]).  The quantitative behavior is similar to when $s_i$  is an indirect cost.  For example, in the two-class example, if $s_1=v_i$ and $s_2=0$ then class 1 jobs will not participate in the spot market.
\section{Discussion and Future Work} \label{sec:discussion}

Our analysis characterizes a truthful BNE for the system where PAYG and the spot market are operating simultaneously.  Our theoretical results show that in many cases the revenue raised by a PAYG system in isolation with a well chosen price $p$ dominates that of this hybrid system.  In particular, we have proved that this always holds true when all classes participate in the PAYG market. It will also hold true in the degenerate case, where the PAYG is chosen suboptimally high, so that all cases prefer to enter the spot market.    Simulations suggest that this may be true in general,  regardless of whether individual classes of  enter the PAYG market.

Our results contrast with previous work, and may also appear counterintuitive to those expecting price discrimination to automatically yield higher revenues.  One significant difference between our work and previous analyses is the combination of two different markets operating simultaneously and under different mechanisms, one an incentive-compatible mechanism and the other a fixed price design requiring no information from the users.   We specifically chose such a system as an abstraction of current pricing systems.  If there is but a single mechanism for all users, that is a single scheduling and pricing  design, then optimal dynamic or auction-based  pricing can raise move revenue than fixed pricing, a result that has been proved under a variety of assumptions. 
   The introduction of a secondary market changes the picture by changing user incentives.   With the combined system, it is very difficult to extract extra revenue by using a spot market, because of the difficulty of avoiding cannibalizing the primary market, where low waiting cost, high value customers choose to drop to the (cheaper)  spot market, thus decreasing revenue.  Our results frame this more precisely.    
  
 Our analysis is based on a number of assumptions.  However, as we have shown, many of them are not critical for our conclusions to hold.  For example, we model the PAYG system as having infinite capacity, which we believe is reasonable given that capacity is endogenous and PAYG jobs are more profitable than spot market jobs.  However, this can be relaxed as long as the capacity is ``large.''  Similarly, our analysis is robust to the ability to set reserve prices in the spot market.

One assumption that does affect our findings is the assumption that the only indirect costs to jobs are waiting time, and hence that other indirect costs are zero, costs such as those associated with preemption or rewriting applications to enable them to cope with possible preemption.   If these additional costs,  $s_i$, differ among classes and are non-zero, it need no longer be the case that the cut-offs $c_i$  decrease with   (increasing) $i$.  Indeed,  $s_i$ may be sufficiently large to offset the adverse selection problem, causing instead high value jobs to stay with PAYG so that the hybrid market extracts more revenue.  For example, having both  $v_i$ and $s_i$  decreasing with $i$ but differences $v_i-s_i$  increasing implies high value jobs favour PAYG while low value favour a spot market.

Lastly, there are three assumptions which merit further study: first, the current analysis assumes that the arrival process is independent of job type. This may not be true if both arrival pattern and value depend on underlying characteristics of the job.  An example of this would be if more valuable jobs tend to arrive at certain times of day. 
Then it is possible that there are equilibria where jobs of different classes but the same cost have different outcomes. Though  as both classes have the same set of optimal outcomes, this requires an amount of coordination on tiebreaking that may be unreasonable in practice. Secondly, our framework is  for a monopolistic provider. The effect of competitive pressures needs to be investigated.  Third, we have assumed that waiting costs are linear.  This does make the equilibrium analysis dramatically easier because only the expected waiting time matters rather than the full distribution, but does mean that the waiting cost can exceed the value of the job, resulting in negative utility.
  
We conclude by discussing the important point that Amazon does in fact operate a spot market, despite our results suggesting that it may not be optimal from a revenue perspective to do so.  One possibility is that their arrival distribution happens to be in one of the ranges where this is in fact optimal or some assumption our model makes is not applicable in their setting.  However, another possibility is that it is being used for reasons other than revenue optimality.  For example, even if Amazon is making less money, they may be gaining useful information about what jobs can easily be interrupted if the system experiences an unexpected spike in demand or large-scale failure.  Alternatively, the lack of revenue optimality may be exactly the point if the spot market is viewed as a way to gain new customers by offering them lower prices while they are still operating at a smaller scaler.  This is consistent with the observation that, anecdotally, Amazon makes it difficult to operate in the spot market at a large scale.  Finally, based on the analysis of Ben-Yehuda et al.~\cite{AgmonBen-Yehuda:2013} which found Amazon controls reserve prices and causes them to spike, Amazon may actually be using something closer to the menu pricing approach we discuss in Section~\ref{sec:reserve}.  Perhaps tellingly in this regard, when Amazon introduced Glacier as a less expensive storage service, they adopted artificial delays rather than a sport market for data access.  Thus, we do not view Amazon's operation of a spot market as necessarily contradicting our model or results.

\appendix
\section{Proof of Theorem \lowercase{\ref{thm:spot-equilibrium}}} \label{sec:proof-spot-equilibrium}
%
%


We begin with a technical lemma  that will be used several times in the appendix.

\begin{lemma}
\label{lem:unique-solution}
Let $(x_1,\ldots,x_k$ and $g_i(x_i),\ldots,g_n(x_n)$ be given such that $g_j(x_j)$ is weakly increasing and semidifferentiable%
\footnote{A function is semidifferentiable if it has left and right derivatives but they need not be equal.  Note that this implies continuity.  An example relevant for Theorem~\ref{thm:hybrid-equilibrium} is a continuous piecewise linear function.} with left derivative at most $1 / \mu$.
Then there exists unique $x_i,\ldots,x_n$ such that
$$\mb{x} = (x_1,\ldots,x_k,x_i,\ldots,x_i,x_{i+1},\ldots,x_n)$$
satisfies
\begin{equation}
\label{eq:spot-cutoffs-eqn-repeat}
\int_0^{x_i} w(t; \mb{x})dt = g_i(x_i)
\end{equation}
for all $j \geq i$.
\end{lemma}

\begin{proof}
Suppose this holds for $i+1$ to prove that it holds for $i$.\
%
%
By our induction hypothesis, 
given any $z \in [0, \mu v_i]$, there is a unique
$$\mb{x}(z) = (x_1,\ldots,x_k,z,\ldots,z,x_{i+1},\ldots,x_n)$$
satisfying $\int_0^{x_j} w(t; \mb{x}(z))dt = g_j(x_j)$ for all $j \geq i+1$.  Next we show that $\phi(z) \triangleq \int_0^{z} w(t; \mb{x}(z))dt$ is strictly increasing in $z$ for $z \in [0, \mu v_i]$. Since $\int_0^{x_{i+2}} w(t; \mb{x}(z))dt$ is increasing in each $x_j$, $\mb{x}_{i+2}(z)$ is decreasing in $z$. Consider $\wh{z} > z$. Then, $[\mb{x}_{i+2}(z),z] \subset [\mb{x}_{i+2}(\wh{z}), \wh{z}]$, and
\begin{align*}
& \int_0^{\wh{z}} w(t; \mb{x}(\wh{z}))dt - \int_0^{z} w(t; \mb{x}(z))dt \\
& = \int_{\mb{x}_{i+2}(\wh{z})}^{\wh{z}} w(t; \mb{x}(\wh{z}))dt - \int_{\mb{x}_{i+2}(z)}^{z} w(t; \mb{x}(z))dt \\  
& = \int_{t \in [\mb{x}_{i+2}(\wh{z}), \wh{z}] \backslash [\mb{x}_{i+2}(z),z]}  w(t; \mb{x}(\wh{z}))dt \\
& \qquad +\int_{\mb{x}_{i+2}(z)}^{z}\left( w(t; \mb{x}(\wh{z})) -  w(t; \mb{x}(z))\right)dt \\
& >\int_{\mb{x}_{i+2}(z)}^{z}\left( w(t; \mb{x}(\wh{z})) -  w(t; \mb{x}(z))\right)dt > 0,
\end{align*}
where the last inequality follows from Lemma \ref{lemma:w-properties}. Hence, $\phi(z)$ is increasing.

Since $\phi(0) = 0 \leq g_i(0)$, $\phi(\mu v_i) \geq v_i \geq g_i(\mu v_i)$, $\phi(z)$ is differentiable with $\phi'(z) > 1 / \mu$, and $g_i$ is weakly increasing and semidifferentiable with left derivative at most $1 / mu$, there is a unique $z$ solving $\phi(z) = v_{i+1}$. This establishes the claim.
\end{proof} 


Applying the lemma immediately gives the first part of the theorem.  We complete the proof by establish the second part of the theorem.

Given $\mb{c}^S$, the expected payoff of a job from class $i$ with waiting cost $c \leq c_i^S$ from participating in the spot market is $v_i - \int_0^{c} w(t;\mb{c}^S)dt$. Since $\mb{c}^S$ satisfies \eqref{eq:spot-cutoffs-constraint}, the expected payoff is nonnegative. The pricing rule \eqref{eq:payment} ensures incentive compatibility for $c \leq c_i^S$. We only need to show that if $c > c_i^S$, the job does not participates in the spot market. Since $w(t;\mb{c}^S) = 1/\mu$ for $t \geq c_1^S$, reporting a waiting cost larger than $c_1^S$ does not improve the waiting time of the job, and the expected payment is at least $m(c_1^S)$. Hence, if a job with waiting cost $c > c_i^S$ decides to participate in the spot market, it will (mis)report a waiting cost $\wh{c} \in [0,c_1^S]$. The expected payoff of the job is $v_i - cw(\wh{c};\mb{c}^S) - m(\wh{c})$. Then,
\begin{align*}
& v_i - cw(\wh{c};\mb{c}^S) - m(\wh{c}) \\
& = v_i - c_i^Sw(\wh{c};\mb{c}^S) - m(\wh{c}) -(c-c_i^S)w(\wh{c};\mb{c}^S), \\
& \leq v_i - c_i^Sw(c_i^S;\mb{c}^S) - m(c_i^S) - (c-c_i^S)w(\wh{c};\mb{c}^S), \\
& = - (c-c_i^S)w(\wh{c};\mb{c}^S < 0.
\end{align*}
The first inequality is from the IC constraint \eqref{eq:ic} and then last equality is because the cutoffs $\mb{c}^S$ are the solutions of \eqref{eq:spot-cutoffs-eqn-repeat}. Hence the expected payoff of a job with waiting cost $c > c_i$ from participating in the spot market is negative and it will not participate. 

This completes the proof of Theorem \ref{thm:spot-equilibrium}. \hfill $\blacksquare$

\section{Proof of Theorem \lowercase{\ref{thm:hybrid-equilibrium}}} \label{sec:proof-hybrid-equilibrium}


\textbf{Step 1}: Existence of the unique solution $\overline{c}$

The existence and uniqueness of $\overline{c}_i$ follows from Lemma~\ref{lem:unique-solution}.

\textbf{Step 2}: Existence of unique $i$

In equilibrium, jobs of class $j$ whose waiting cost is $c_j(p)$ are indifferent between participating in the spot market and some outside option, either PAYG or balking.  That is,
$$v_j - \int_0^{c_j(p)} w(t; \mb{c}(p))dt = \max\left(0,v_j - \frac{p + c_j(p)}{\mu}\right),$$
or
$$\int_0^{c_j(p)} w(t; \mb{c}(p))dt = \min\left(v_j,\frac{p + c_j(p)}{\mu}\right).$$

The existence and uniqueness of such a $\mb{c}(p)$ follows from Lemma~\ref{lem:unique-solution}.  Given this solution, there is either a unique $c^*$ such that
$$\int_0^{c^*} w(t; \mb{c}(p))dt = \frac{p + c^*}{\mu},$$
or 
$$\int_0^{\mu v_1} w(t; \mb{c}(p))dt < \frac{p + \mu v_1}{\mu}.$$
In the latter case, no job participates in PAYG so we are done.  In the former, there is a unique $i$ such that
$$v_{i+1} \leq \frac{p + c^*}{\mu} < v_i,$$
or
$$p \in [\mu v_{i+1} - c^*, \mu v_i - c^*).$$
By construction, $c^* \geq \overline{c}_j$ for $j > i$ and $c^* \leq \overline{c}_j$ for $j \leq i$.  Thus there is a unique such $i$ as desired.

\textbf{Step 3}: Existence of the solution to the equations governing the choice of $\mb{c}(p)$.

Our argument in the previous step shows that the $\mb{c}(p)$ we construct is a solution to \eqref{hybrid-cutoff-eqn}.  By Lemma~\ref{lem:unique-solution} it is the unique solution.

\textbf{Step 4}: Characterizing equilibrium.

To be an equilibrium, the cutoff vector $\mb{c}(p)$ must satisfy the following constraints for all $c < c_j(p)$: 
\begin{equation}\label{eq:hcutoff-constraint1}
\begin{array}{l}
v_j - \int_0^{c}w(t; \mb{c}(p))dt \geq 0 \text{ and }\\
v_j - \int_0^{c}w(t; \mb{c}(p))dt > v_j- \frac{p+c}{\mu}.
\end{array}
\end{equation}
\begin{equation}\label{eq:hcutoff-constraint2}
\begin{array}{l}
\text{Hence either, } \left\{ 
\begin{array}{l}
   v_j - \int_0^{c_j(p)}w(t; \mb{c}(p))dt = 0, \\
   v_j - \frac{p+c}{\mu} < 0 \text{ for } c \in [c_j(p), \mu v_j - p],
\end{array} \right. \\
\text{or, } \left\{ 
\begin{array}{l}
   v_j - \int_0^{c_j(p)}w(t; \mb{c}(p))dt = v_j -  \frac{p+c_j(p)}{\mu}, \\
   v_j - \frac{p+c}{\mu} \geq v_j - \int_0^{c_i(p)}w(t; \mb{c}(p))dt \text{ and }\\
   v_j - \frac{p+c}{\mu} \geq 0 \text{ for } c \in [c_j(p), \mu v_j - p].
\end{array} \right.
\end{array}
\end{equation}
Constraint \eqref{eq:hcutoff-constraint1} says that the jobs with waiting cost below the cutoff get nonnegative expected payoff from participating in the spot market. Moreover, this expected payoff is strictly higher than that from participating in PAYG. Constraint \eqref{eq:hcutoff-constraint2} says that either no jobs from a class $i$ participate in PAYG, or jobs split between the spot market and PAYG with those above the cutoff weakly preferring PAYG.

First, consider classes $j \geq i$.  For these classes, $\int_0^{c_i(p)}w(t; \mb{c}(p))dt = v_i$, so by construction the first inequality of \eqref{eq:hcutoff-constraint1} is satisfied.  For the second, $\frac{p+c}{\mu} - \int_0^{c}w(t; \mb{c}(p))dt$ is decreasing in $c$ for $c < c_1(p)$, where it reaches 0.  Therefore the second inequality is satisfied.  For \eqref{eq:hcutoff-constraint2}, these classes satisfy the first system of constraints.  By construction the equality is satisfied, while the inequality is satisfied because $\frac{p+c}{\mu} - \int_0^{c}w(t; \mb{c}(p))dt > 0$.

Now consider classes $j < i$.  These classes share the same cutoff $c^*$ satisfying $\int_0^{c^*}w(t; \mb{c}(p))dt = \frac{p+c^*}{\mu}$.  For these classes, $v_j > \frac{p + c_j(p)}{\mu} > 0$, so \eqref{eq:hcutoff-constraint1} is satisfied.  For \eqref{eq:hcutoff-constraint2}, these classes satisfy the second system of constraints.  By construction the equality is satisfied, while the first inequality is satisfied because $\frac{p+c}{\mu} - \int_0^{c}w(t; \mb{c}(p))dt$ is decreasing in $c$ for $c < c_j(p)$, where it reaches 0.  The second inequality trivially holds for all $c \leq \mu v_i - p$. (This point is part of the intuition for our results about revenue).

This gives the desired equilibrium characterization. A job from class $j < i$ with waiting cost $c$ participates in the spot market if and only if $c \leq c_j(p)$; it participates in PAYG if $c_j(p) \leq c \leq \mu v_j - p$.  A job from class $j \geq i$ with waiting cost $c$ participates in the spot market if and only if $c \leq c_j(p)$; it never participates in PAYG.

\end{document}